\theoremstyle{thmstyleone}%
\newtheorem{theorem}{Theorem}%  meant for continuous numbers
\newcommand{\sdvs}{{\sf LaSDVS }}
\theoremstyle{thmstyletwo}%
\newtheorem{lemma}{Lemma}%
\theoremstyle{thmstylethree}%
\newtheorem{definition}{Definition}%
\begin{document}

\title[{\sdvs} A Post-Quantum Secure Compact Strong-Designated Verifier Signature]{{\sdvs}: A Post-Quantum Secure Compact Strong-Designated Verifier Signature}

%%=============================================================%%
%% GivenName	-> \fnm{Joergen W.}
%% Particle	-> \spfx{van der} -> surname prefix
%% FamilyName	-> \sur{Ploeg}
%% Suffix	-> \sfx{IV}
%% \author*[1,2]{\fnm{Joergen W.} \spfx{van der} \sur{Ploeg} 
%%  \sfx{IV}}\email{iauthor@gmail.com}
%%=============================================================%%

\author[1]{\fnm{Shanu} \sur{Poddar}}\email{shanupoddar2112@gmail.com}

\author[1]{\fnm{Sweta} \sur{Mishra}}\email{sweta.mishra@snu.edu.in}

\author[2]{\fnm{Tapaswini} \sur{Mohanty}}\email{mtapaswini37@gmail.com}

\author*[3]{\fnm{Vikas} \sur{Srivastava}}\email{vikas.math123@gmail.com}

\author[2]{\fnm{Sugata} \sur{Gangopadhyay}}\email{sugata.gangopadhyay@cs.iitr.ac.in }

\affil[1]{\orgdiv{Department of Computer Science and Engineering}, \orgname{Shiv Nadar Institute of Eminence, NCR}, \orgaddress{\street{} \city{} \postcode{201314}, \state{Delhi}, \country{India}}}

\affil[2]{\orgdiv{Department of Computer Science and Engineering}, \orgname{Indian Institute of Technology Roorkee}, \orgaddress{\street{}\city{Roorkee}, \postcode{247667}, \state{Uttarakhand}, \country{India}}}

\affil*[3]{\orgdiv{Department of Mathematics}, \orgname{Indian Institute of Technology Madras}, \orgaddress{\street{}\city{Chennai}, \postcode{600036}, \state{Tamil Nadu}, \country{India}}}
%%==================================%%
%% Sample for unstructured abstract %%
%%==================================%%

\abstract{
Digital signatures are fundamental cryptographic primitives that ensure the authenticity and integrity of digital communication. However, in scenarios involving sensitive interactions---such as e-voting or e-cash---there is a growing need for more controlled signing mechanisms. \emph{Strong-Designated Verifier Signature} (SDVS) offers such control by allowing the signer to specify and restrict the verifier of a signature. The existing state-of-the-art SDVS are mostly based on number-theoretic hardness assumptions. Thus, they are not secure against quantum attacks. Moreover, Post-Quantum Cryptography (PQC)-based SDVS are inefficient and have large key and signature sizes. 
In this work, we address these challenges and propose an efficient post-quantum SDVS (namely, \sdvs) based on ideal lattices under the hardness assumptions of the {\sf Ring-SIS} and {\sf Ring-LWE} problems. \sdvs achieves advanced security properties including strong unforgeability under chosen-message attacks, non-transferability, non-delegatability, and signer anonymity. By employing the algebraic structure of rings and the \emph{gadget trapdoor} mechanism of Micciancio et al., we design \sdvs to minimize computational overhead and significantly reduce key and signature sizes. Notably, our scheme achieves a compact signature size of $\mathcal{O}(n\log q)$, compared to $\mathcal{O}(n^2)$ size, where $n$ is the security parameter, in the existing state-of-the-art PQC designs. To the best of our knowledge, \sdvs offers the \textit{smallest private key and signature size} among the existing PQC-based SDVS schemes.}

\keywords{Lattice-Based Cryptography, Post-Quantum Cryptography, Strong Designated Verifier Signature, Ring-SIS, Ring-LWE}

%%\pacs[JEL Classification]{D8, H51}

%%\pacs[MSC Classification]{35A01, 65L10, 65L12, 65L20, 65L70}

\maketitle

\section{Introduction}

Digital Signatures \cite{nist1992digital} are a crucial cryptographic primitive that provides security properties such as \textit{authentication}, \textit{integrity}, and \textit{non-repudiation} \cite{katz2010digital}. In particular, it ensures that a message originates from a particular sender, has not been tampered with in between, and prevents any denial caused by signers in digital communication. A digital signature is publicly verifiable using the signer's public key and cannot prevent dishonest verifiers from transferring the validation of sensitive signed messages. However, certain real-life applications exist where the sender of a digital message wants to ensure that only the designated receivers can verify and be convinced whether the signature is valid. As a solution to this problem, the notion of \textit{Designated Verifier Signature} (DVS) was introduced by Jakobsson et al. \cite{10} in Eurocrypt'96. In a DVS scheme, only the designated verifiers can verify the signature and obtain the conviction of correctness of the proof. In this scheme, the signer and the designated verifier both have the equivalent signing privileges, i.e., the designated verifier also has the right to produce a valid signature over the same message. Any third party except the signer and the designated verifier can not distinguish whether the signature was generated by the actual signer or the designated verifier. This property is named as \textit{Non-Transferability} (\textit{NT}). \textit{Strong Designated Verifier Signature} (SDVS), which is a stronger notion of the DVS scheme (also proposed by Jakobsson et al \cite{10} and further explained by Vergnaud et al. \cite{11}) provides some extra useful privacy properties. Specifically, in SDVS, the private key of the designated verifier is also involved while verifying the signature. This means that nobody else (including a third party) can check the validity of the signature, even if they have the public key. SDVS further provides \textit{Non-Delegatability} (\textit{ND}), which is a desired property in cases where the responsibility of the signer becomes very important. For example, an e-voting protocol where it should not be possible to delegate the signing rights to others. Thus, {SDVS} can be useful in many applications, e.g., e-voting, digital subscription system, online contract agreements, product licensing, etc. 

Several SDVS exist in the literature, but the existing state-of-the-art SDVS protocols are based on the number-theoretic assumptions \cite{11}. Quantum algorithms like Shor's algorithm \cite{12} may be used for the cryptanalysis of the existing SDVS designs.  To withstand the quantum threats, there is an urgent need to design a quantum secure SDVS protocol. A new direction of research called \textit{Post-Quantum Cryptography} (PQC) has been announced by the \textit{NIST} in 2016 \cite{18}. The goal of PQC is to design and analyze protocols that can securely withstand quantum attacks. PQC can be divided into several categories, such as Lattice-based \cite{17}, Multivariate-based \cite{ding2006multivariate}, Hash-based \cite{16}, Code-based \cite{overbeck2009code}, and Isogeny-based \cite{mishra2025survey}. Among these, Lattice-based Cryptography is one of the most important research directions in PQC. In this setting, the security of the proposed cryptographic protocols relies on the worst-case hardness of lattice-based problems \cite{13}.\\

\noindent {\it \textbf{Related Works}.}
A lot of fundamental work has been done in the domain lattice-based cryptography over the last two decades. One of the most groundbreaking works was GPV \cite{7} - the first lattice-based provably secure signature scheme. In \cite{7}, the authors introduced a new notion of trapdoor function called \textit{Pre-image Samplable Function} (PSF). Based on the idea of this trapdoor, several lattice-based cryptographic protocols have been defined. %The first lattice-based DVS scheme was proposed by Wang et al. \cite{2}. The design in \cite{2} used PSF. In addition, Bonsai Trees were utilized for the basis delegation process. \cite{2} was proven secure in the random oracle model but consists of large key and signature sizes.
The first lattice-based DVS scheme was proposed by Wang et al. \cite{2}, which employed PSF and utilized Bonsai Trees for the basis delegation process. Although the scheme was proven secure in the random oracle model, it suffers from large key and signature sizes. In 2017, Noh et al. \cite{14} proposed an SDVS scheme that was proven secure in the standard model. They used the technique of Learning with Errors (LWE)-based public key cryptosystem, and lattice-based chameleon hash function. The SDVS design in \cite{14} also suffered from complex computations and large key and signature sizes. In 2019, Cai et al. \cite{4} proposed a comparatively efficient SDVS scheme based on the hardness of the Ring-SIS problem. They used the idea of rejection sampling to generate a signature of efficient size. \textit{ND} was introduced for SDVS by Lipma et al. \cite{15}. As discussed above, \textit{ND} is a crucial property in real-life applications of SDVS. Earlier constructions of SDVS with \textit{ND} security property were proposed only in the classical setting. In 2024, the first lattice-based SDVS with \textit{ND} property was proposed by Zhang et al. \cite{1}, which provides provable security based on the hardness of SIS and LWE problems. A summary of the related works is provided in Table \ref{tab:lit-survey}.\\
%In 2022, a secure SDVS scheme with ND property was proposed by Thanalakshmi et al. \cite{bbid} based on hash functions.

% We propose an ideal lattice-based SDVS scheme providing SU-CMA, PSI, NT, and ND security properties. The SU-CMA security is given in the random oracle model, motivated by Lyubashevsky's lattice signature without trapdoors using a collision-resistant hash function.??

\noindent {\bf Our Contribution:} The major contribution of this work is summarized below. 
\begin{enumerate}
    \item We propose an efficient post-quantum SDVS, namely \sdvs based on ideal lattice under the hardness assumptions of {\sf Ring-SIS} and {\sf Ring-LWE}
    \item \sdvs provides advanced security properties such as strong unforgeability under the chosen-message attack, non-transferability, non-delegatability, and privacy of signer's identity.
    \item We design \sdvs using the algebraic structures of the ring. Thus, we are able to minimize the computational costs, parameter sizes, and signature and keys overhead by a significant margin. We exploited the idea of the trapdoor, called the Gadget trapdoor, defined by Micciancio et al. in \cite{6}. We emulated this trapdoor definition in the ideal lattice computations that gave us more efficiency than the standard lattices based SDVS designs.
    \item \sdvs is very efficient and compact when compared to the existing state-of-the-art SDVS. In particular, \sdvs provides the {\it smallest private key size} among the existing PQC-based SDVS design. 
    
    \item Our scheme achieves a \textit{compact signature size} of $\mathcal{O}(n \log q)$,  
compared to the $\mathcal{O}(n^2)$ signature sizes in standard constructions.  This corresponds to a {reduction by a factor of} $n/ \log q$. 
In fact, the signature size in \sdvs is {\it smallest} among all PQC-based SDVS designs.
\end{enumerate}
 % Using the algebraic structure of the ring, we can minimize the computational costs, parameter size, and key sizes. 

\setlength{\tabcolsep}{4pt}
\begin{table}
    \centering
    \small
    \caption{Tecniques, advantages, and limitations of existing lattice-based SDVS schemes}
    \label{tab:lit-survey}
    
    \begin{tabular}{|>{\raggedright\arraybackslash}p{1.2cm}|>{\raggedright\arraybackslash}p{2.3cm}|>{\raggedright\arraybackslash}p{5cm}|>{\raggedright\arraybackslash}p{3cm}|}
        \hline
        \textbf{Scheme} & \textbf{Cryptographic Techniques} & \textbf{Advantages} & \textbf{Limitations} \\
        \hline
        \textbf{Wang et al. } \cite{2} &
        \begin{minipage}[t]{\dimexpr\linewidth-2\tabcolsep\relax}
            $\bullet$ Pre-image Sampling Functions\\
            $\bullet$ Bonsai Trees
        \end{minipage} &
        \begin{minipage}[t]{\dimexpr\linewidth-2\tabcolsep\relax}
            $\bullet$ First lattice-based SDVS scheme\\
            $\bullet$Based on the hardness of LWE and SIS problem\\
            $\bullet${ Provides \textit{EU} and \textit{NT} security}
        \end{minipage} &
        \begin{minipage}[t]{\dimexpr\linewidth-2\tabcolsep\relax}
            $\bullet$ Does not provide \textit{PSI} and \textit{ND} security\\
            $\bullet$ Complex computations\\
            $\bullet$ Large key sizes
        \end{minipage} \\
        \hline
        \textbf{Noh et al. } \cite{14} &
        \begin{minipage}[t]{\dimexpr\linewidth-2\tabcolsep\relax}
            $\bullet$ Used LWE-based PKC\\
            $\bullet$ Chameleon Hash function
        \end{minipage} &
        \begin{minipage}[t]{\dimexpr\linewidth-2\tabcolsep\relax}
            $\bullet$ First lattice-based SDVS scheme in standard model\\
            $\bullet${ Provides \textit{SU}, \textit{NT}, and \textit{PSI} security}\\
            $\bullet$ Based on the hardness of SIS and LWE problems
        \end{minipage} &
        \begin{minipage}[t]{\dimexpr\linewidth-2\tabcolsep\relax}
            $\bullet$ Does not provide \textit{ND} security\\
            $\bullet$ Complex computations\\
            $\bullet$ Very large key sizes
        \end{minipage} \\
        \hline
        \textbf{Cai et al.} \cite{4} &
        \begin{minipage}[t]{\dimexpr\linewidth-2\tabcolsep\relax}
            $\bullet$ Hard problems in ideal lattice\\
            $\bullet$ Filtering techniques
        \end{minipage} &
        \begin{minipage}[t]{\dimexpr\linewidth-2\tabcolsep\relax}
            $\bullet$ Not using PSF and Bonsai Trees\\
            $\bullet$Resisting side Channel attacks\\
            $\bullet${ Provides \textit{EU, NT}, and \textit{PSI} security}\\
            $\bullet${ Based on the hardness of $\mathcal{R}-$SIS problem}
        \end{minipage} &
        \begin{minipage}[t]{\dimexpr\linewidth-2\tabcolsep\relax}
            $\bullet$ Does not offer \textit{ND} security\\
            $\bullet$ Large key sizes, but small signature size.
        \end{minipage} \\
        \hline
        % \textbf{Lipma et al.} &
        % \begin{minipage}[t]{\linewidth-2\tabcolsep}
        %     $\bullet$ Diffie-Hellman
        % \end{minipage} &
        % \begin{minipage}[t]{\linewidth-2\tabcolsep}
        %     $\bullet$ Provides ND security
        % \end{minipage} &
        % \begin{minipage}[t]{\linewidth-2\tabcolsep}
        %     $\bullet$ Not secure in the quantum world
        % \end{minipage} \\
        % \hline
        \textbf{Zhang et al.} \cite{1} &
        \begin{minipage}[t]{\dimexpr\linewidth-2\tabcolsep\relax}
            $\bullet$ Rejection sampling\\
            $\bullet$ Pre-image Samplable Functions
        \end{minipage} &
        \begin{minipage}[t]{\dimexpr\linewidth-2\tabcolsep\relax}
            $\bullet$ First lattice-based SDVS with \textit{ND} security\\
            $\bullet$Provable secure based on SIS and LWE problem\\
            $\bullet$Provides \textit{EU, NT}, and \textit{PSI} security
        \end{minipage} &
        \begin{minipage}[t]{\dimexpr\linewidth-2\tabcolsep\relax}
            $\bullet$ Large key and signature sizes\\
            $\bullet$ Impractical for real-world applications
        \end{minipage} \\
        \hline
    \end{tabular}
    {\textit{EU}: Existential Unforgeability, \textit{SU}: Strong Unforgeability, \textit{NT}: Non-Transferability, \textit{PSI}: Privacy for Signer's Identity, \textit{ND}: Non-Delegatability. \textit{SIS}: Short Integer Solution,\textit{LWE}: Learning With Errors, \textit{PKC}: Public Key Cryptosystems.}
\end{table}

% For parameters $n,m, \text{ and } q$; where $n$ is the security parameter, $m = \mathcal{O}(n\log q)$ and $q= poly(n)$ is a prime modulus value; the $SIS$ problem is defined as: given a uniform random matrix $A \in \mathbb{Z}^{n\times m}$ and the goal is to find a non-zero $\mathbf{e}$ with small norm ( mostly $\mathbf{e} \in \{-1, 0, 1\}^m$) s.t. $A\mathbf{e} = 0 \pmod q$
% \[i.e., h_A(\mathbf{e}) = A\mathbf{e} = 0 \pmod q\]

% Though $h_A$ is a very strong one-way function in terms of security, unfortunately, $h_A$ is quite inefficient to compute, as just reading the description of $A$ takes $nm\log q > \mathcal{O}(n^2)$ time, which is too much when security parameter $n$ is large. In real-world applications, we want the function $h_A$ to make its running time linear in $n$.

% In the Ring-SIS, where the underlying lattice is a subset of a ring $\mathcal{R} = \mathbb{Z}[x]/<x^n+1>$, instead of a large-dimensional matrix $A$, we define the problem by $l = m/n$ uniformly random elements $a_1, \dots, a_l \in \mathcal{R}$. The goal is to find $l$ short elements $e_1, \ldots, e_l \in \mathcal{R}_q ( \text{ or }\mathcal{R}_{\{-1, 0, 1\}} )$ not all zero s.t. 
% \[a_1e_1 + \ldots + a_le_l = 0 \pmod q\].

% Now considering $A = [Rot(a_1), \ldots, Rot(a_l)] \in \mathbb{Z}^{n\times m}$, $A\mathbf{e}$ can be computed in time $\mathcal{O}(m)$, and it can be done feasibly faster via Fast Fourier Transform.

\section{Preliminaries}

\subsection{Notations}
The notations, which have been used in this paper, are described in Table~\ref{Table 1}.

\begin{table}[h]
\caption{Notations for the symbols used in the paper}
\label{Table 1}
  \begin{tabular}{|c|c|}
  \hline
   $\mathcal{R}$ & Ring of polynomials of degree $n-1$ $\mathbb{Z}[x]/<x^n+1>$ with integer coefficients \\
   \hline
   $\mathcal{R}_q$ &  Ring of polynomials of degree $n-1$ $\mathbb{Z}_q[x]/<x^n+1>$ with coefficients in $\mathbb{Z}_q$ \\
   \hline
   $\xleftarrow{\$}$ & Sampling uniformly random elements\\
   \hline
   $\mathbf{a}$ & Bold small-case letter denotes a vector \\
   \hline
   $\mathbf{R}$ & Bold capital-case letter denotes a matrix\\
   \hline   
   $t$  & Normal small-case letter denotes an element of the ring \\
   \hline
   $\pmod q$ & Elements from the set $(-(q-1)/2, \ldots, 0, \ldots, (q-1)/2]$\\
   \hline
   $PPT$  & Probabilistic Polynomial Time \\
   \hline
   $||$ & Concatenation \\
   \hline
   $||\cdot||$ & Euclidean norm\\
   \hline
   $\mathbf{a}\cdot \mathbf{b}$ & Inner product of two vectors in the ring\\
   \hline
   $\mathbf{a}t$ & Scalar product between a vector of the ring and a ring element\\
   \hline
   $\emptyset$ & Empty set\\
   \hline
   $\mathbf{a}|\mathbf{b}$ & Concatenation of elements of $\mathbf{a}$ followed by the elements of $\mathbf{b}$\\
 \hline
  \end{tabular}

\end{table}

\subsection{Background of Lattice-Based Cryptography}
In this section, we discuss the concepts of lattices, ideal lattices, and the important results in the ideal lattices.
\begin{definition}
\textbf{(Lattice \cite{8})} Let $\mathbf{v_1},\ldots , \mathbf{v_n} \in \mathbb{R}^m$ be a set of linearly independent vectors. The lattice \(\Lambda\) generated by $\mathbf{v_1}, \ldots , \mathbf{v_n}$ is the set of linear combinations of $\mathbf{v_1},\ldots , \mathbf{v_n}$ with coefficients in integers \(\mathbb{Z}\) i.e; \[\Lambda = \{a_1\mathbf{v_1}+ \ldots  +a_n\mathbf{v_n} : a_1, a_2, \ldots ,a_n \in \mathbb{Z}\}. \]
Three types of integer lattices are mainly considered in the literature. For a given integer modulus $q$, a matrix \(\mathbf{A} \in \mathbb{Z}_q^{n\times m}\), and \(\mathbf{u} \in \mathbb{Z}_q^n\), define:
\[\Lambda_{q}(\mathbf{A}^T) = \{\mathbf{x}\in \mathbb{Z}^m: \exists \; \mathbf{s} \in \mathbb{Z}_q^n \text{ s.t. }\mathbf{A}^T\cdot \mathbf{s} = \mathbf{x}\pmod q\}\]

\[
\Lambda^{\perp}_{q}(\mathbf{A}) = \{\mathbf{x} \in \mathbb{Z}^m : \mathbf{A} \cdot \mathbf{x} \equiv \mathbf{0} \pmod{q} \}
\]

\[\Lambda^{\mathbf{u}}_{q}(\mathbf{A}) = \{\mathbf{x}\in \mathbb{Z}^m: \mathbf{A}\cdot \mathbf{x} = \mathbf{u}\pmod q\}\]
\end{definition}
\noindent We now provide the definition of the discrete Gaussian distribution over a lattice, which plays a central role in the design and analysis of lattice-based cryptographic schemes.
\begin{definition}

\textbf{(Discrete Gaussian \cite{9})} Let \(\Lambda \subset \mathbb{Z}^m, c \in \mathbb{R}^m, \sigma \in \mathbb{R}^+\). Define:
\[\rho_{\sigma, c}(x) = \exp(-\pi\frac{||x-c||^2}{\sigma^2}) \text{ and } \rho_{\sigma, c}(\Lambda) = \sum_{x\in \Lambda}\rho_{\sigma, c}(x).\]The discrete gaussian distribution over $\Lambda$ with center $c$ and parameter $\sigma$ is defined as 
\[\forall x \in \Lambda,\, \mathcal{D}_{\Lambda, \sigma, c}(x) = \frac{\rho_{\sigma, c}(x)}{\rho_{\sigma, c}(\Lambda)}.\]
\end{definition}
We now define the notion of ideal lattices, which are structured lattices arising from polynomial rings and are fundamental to the efficiency and algebraic properties of lattice-based cryptographic constructions.

\begin{definition}
\textbf{(Ideal Lattice \cite{8})} Consider the ring \(\mathcal{R} = \mathbb{Z}[x]/(x^n+1)\) and \(\mathcal{R}_q = \mathbb{Z}_q[x]/(x^n+1)\) for a variable \(x\). Corresponding to the integer lattices defined above, we define three types of lattices over the ring \(\mathcal{R}\). For a integral modulus \(q,\, \mathbf{a} \in \mathcal{R}_q^k \text{ 
 , and  } u \in \mathcal{R}_q\), define:

\[\Lambda_{q}(\mathbf{a}) = \{\mathbf{x}\in \mathcal{R}^k: \exists  s \in \mathcal{R}_q \text{ s.t. } \mathbf{a}\cdot s = \mathbf{x}\pmod q\}\]

\[\Lambda^{\perp}_{q}(\mathbf{a}^T) = \{\mathbf{x}\in \mathcal{R}^k: \mathbf{a}^T\cdot \mathbf{x} = 0\pmod q\}\]

\[\Lambda^{u}_{q}(\mathbf{a}^T) = \{\mathbf{x}\in \mathcal{R}^k: \mathbf{a}^T\cdot \mathbf{x} = u\pmod q\}\]
 \end{definition}

\begin{definition}
(\textbf{Short Integer Solution (SIS) Problem \cite{13}}) Given parameters $n,m, \text{ and } q$; where $n$ is the security parameter, $m = \mathcal{O}(n\log q)$ and $q= poly(n)$ is a prime modulus value. For a uniform random matrix $\mathbf{A} \in \mathbb{Z}_q^{n\times m}$, find a non-zero vector $\mathbf{e}\in \Lambda_q^{\perp}(\mathbf{A})\subset \mathbb{Z}^m$ with ``small norm" s.t. $\mathbf{A}\mathbf{e} = 0 \pmod q$ 
\end{definition}

\begin{definition}
(\textbf{Decisional Learning With Errors (LWE) Problem \cite{13}}) Given parameters $n,m, \text{ and } q$: where $n$ is the security parameter, $m = \mathcal{O}(n\log q)$ and $q = poly(n)$ is a prime modulus value. For a given pair $(\mathbf{A}, \mathbf{b} = \mathbf{A}^T\mathbf{s} + \mathbf{e}) \in \mathbb{Z}_q^{n\times m} \times \mathbb{Z}^m$ s.t. $\mathbf{A}\xleftarrow{\$}\mathbb{Z}^{n\times m}$,  $\mathbf{s}\xleftarrow{\$}\mathbb{Z}^n$ and $\mathbf{e} \in \mathbb{Z}^m$ is coming from a gaussian distribution defined over $\mathbb{Z}^m$; distinguish this pair with a uniformly random pair $(\mathbf{A}, \mathbf{b}) \in \mathbb{Z}^{n\times m} \times \mathbb{Z}^m$.
\end{definition}

\begin{definition}
(\textbf{Ring-SIS Problem \cite{9}}) Given a ring 
$\mathcal{R}_q = \mathbb{Z}_q[x]/<x^n + 1>$ and uniformly random elements $a_1, \dots, a_l \in \mathcal{R}_q$, where $n$ is the security parameter, $q = poly(n)$ a prime modulus, and $l = \mathcal{O}(\log q)$. Find $l$ ``short elements" $\mathbf{e} = \{e_1, \ldots, e_l \}\in \Lambda_q^{\perp}(\mathbf{a^T}) \subset \mathcal{R}^l$ s.t. 
\(a_1e_1 + \ldots + a_le_l = 0 \pmod q\).
\end{definition}

\begin{definition}
(\textbf{Decisional Ring-LWE Problem\cite{9}}) Given \( \mathbf{a} = (a_1, \ldots, a_l)^T \in \mathcal{R}_q^l \), a vector of \( l \) uniformly random polynomials, and \( \mathbf{b} = \mathbf{a}s + \mathbf{e} \pmod q\), where \( s \xleftarrow{\$} \mathcal{R}_q\) and \( \mathbf{e} \xleftarrow{\$} \mathcal{R}_q^l \) from a gaussian distribution defined over $\mathcal{R}_q$; distinguish \( (\mathbf{a}, \mathbf{b} = \mathbf{a}s + \mathbf{e}) \) from \( (\mathbf{a}, \mathbf{b}) \) drawn uniformly at random from \( \mathcal{R}_q^l \times \mathcal{R}_q^l \).

\end{definition}

We now introduce the concept of trapdoors for ideal lattices, which enable efficient preimage sampling and form the foundation of many lattice-based cryptographic constructions, including identification and signature schemes.

\begin{definition}
\textbf{(\textbf{g}-Trapdoor for Ideal Lattice\cite{9})} Let \(\mathbf{a} \in \mathcal{R}_q^{l+k} \text{ and } \mathbf{g} = (1, 2, \ldots, 2^{k-1}) \in \mathcal{R}_q^{k}\). A $\mathbf{g}-$trapdoor for $\mathbf{a}$ is a collection of linearly independent vectors of ring elements \(\mathbf{R} = (\mathbf{r_1, \ldots, r_k}) \in \mathcal{R}_q^{l\times k}\) such that \(\mathbf{a}^T \begin{bmatrix}
\mathbf{R} \\ 
\mathbf{I}_k 
\end{bmatrix}= h\mathbf{g}^T\) for some non-zero ring element \(h\in \mathcal{R}_q\). The value \( h \) is known as the tag of the trapdoor. The effectiveness of the trapdoor is evaluated based on its largest singular value, \( s_1(\mathbf{R}) \), which is determined as the maximum singular value of the matrix representation of \( \mathbf{R} \) in \( \mathbb{Z}_q^{ln \times ln} \).
\end{definition}

\begin{lemma}[$(\mathbf{a}, \mathbf{R})\leftarrow${\sf ringGenTrap}$^{D}(\mathbf{a}_0, h)$ \cite{8}]
Given as input a vector of ring elements\break $\mathbf{a}_0 = (a_1, \dots, a_l)^T \in \mathcal{R}_q^l$, a non-zero invertible ring element $h \in \mathcal{R}_q$, a distribution $\chi^{l \times k}$ over $\mathcal{R}^{l \times k}$. (If no particular $\mathbf{a}_0, h$ are given as input, then the algorithm may choose them itself, e.g., picking $\mathbf{a}_0 \leftarrow \mathcal{R}_q^l$ uniformly, and setting $h = 1$.), the algorithm outputs a vector of ring elements $\mathbf{a} = (\mathbf{a}_0^T, \mathbf{a}_1^T = h\mathbf{g}^T - \mathbf{a}_0^T\mathbf{R})^T \in \mathcal{R}_q^{l+k}$, and a trapdoor $\mathbf{R} = (\mathbf{r}_1, \dots, \mathbf{r}_k) \in \mathcal{R}^{l \times k}$ with tag $h \in \mathcal{R}_q$. Moreover, the distribution of $\mathbf{a}$ is close to uniform (either statistically or computationally) as long as the distribution of $(\mathbf{a}_0^T, -\mathbf{a}_0^T \mathbf{R})$ is.
\end{lemma}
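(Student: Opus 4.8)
The plan is to split the claim into an algebraic \emph{correctness} part, namely that $(\mathbf{a}, \mathbf{R})$ really is a $\mathbf{g}$-trapdoor with tag $h$ as required by the preceding definition, and a \emph{distributional} part, namely that $\mathbf{a}$ is near-uniform under the stated hypothesis. For correctness I would simply substitute the definition of $\mathbf{a}_1$ and compute over $\mathcal{R}_q$:
\[
\mathbf{a}^T \begin{bmatrix} \mathbf{R} \\ \mathbf{I}_k \end{bmatrix} = \mathbf{a}_0^T \mathbf{R} + \mathbf{a}_1^T = \mathbf{a}_0^T \mathbf{R} + \bigl(h\mathbf{g}^T - \mathbf{a}_0^T \mathbf{R}\bigr) = h\mathbf{g}^T .
\]
This is exactly the identity in the $\mathbf{g}$-trapdoor definition, and since $h$ is non-zero and invertible it is an admissible tag. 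Along the way I would verify the dimensions: $\mathbf{a}_0^T\mathbf{R}$ is a $1\times k$ vector over $\mathcal{R}_q$, so $\mathbf{a}_1 \in \mathcal{R}_q^k$ and $\mathbf{a} \in \mathcal{R}_q^{l+k}$, matching the asserted output type.

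For the distributional part, the key observation is that the output pair is a fixed translate of the pair whose uniformity is assumed. Concretely, writing $\mathbf{a} = (\mathbf{a}_0^T, -\mathbf{a}_0^T\mathbf{R} + h\mathbf{g}^T)$, the map $\phi:(\mathbf{u},\mathbf{v}) \mapsto (\mathbf{u}, \mathbf{v} + h\mathbf{g}^T)$ sends $(\mathbf{a}_0^T, -\mathbf{a}_0^T\mathbf{R})$ to $\mathbf{a}$. Because $h$ and $\mathbf{g}$ are fixed, $\phi$ is a translation in the last $k$ coordinates, hence a bijection on $\mathcal{R}_q^l \times \mathcal{R}_q^k$ that maps the uniform distribution to itself. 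A bijective translation preserves both the uniform distribution and statistical distance, and since both $\phi$ and $\phi^{-1}$ are polynomial-time computable it also preserves computational indistinguishability. Therefore the distance of $\mathbf{a}$ from uniform equals the distance of $(\mathbf{a}_0^T, -\mathbf{a}_0^T\mathbf{R})$ from uniform, which is exactly the ``as long as'' hypothesis. This completes the proof of the statement as written.

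The step I expect to carry the real weight is not in the lemma itself but in discharging its hypothesis, which is where I would spend the effort when applying it. In the \emph{statistical} regime one invokes a ring regularity (leftover-hash) lemma: when $\mathbf{a}_0$ is uniform and the entries of $\mathbf{R}$ are drawn from $\chi$ with enough min-entropy, and $l$ is large enough relative to $k$ and $\log q$, the product $\mathbf{a}_0^T\mathbf{R}$ is within negligible statistical distance of uniform even conditioned on $\mathbf{a}_0$. In the \emph{computational} regime, each column $\mathbf{a}_0^T\mathbf{r}_j$ is a decisional Ring-LWE sample with $\mathbf{a}_0$ as its public vector and $\mathbf{r}_j \sim \chi$, so a hybrid argument over the $k$ columns reduces near-uniformity of $(\mathbf{a}_0^T, \mathbf{a}_0^T\mathbf{R})$ to the decisional Ring-LWE problem stated in the preliminaries. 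The delicate points there are matching the exact Ring-LWE sample format, controlling the factor-$k$ hybrid loss, and checking that $\chi$ satisfies the width and entropy conditions the chosen regularity or Ring-LWE statement requires; the translation argument and the correctness computation above are otherwise routine.
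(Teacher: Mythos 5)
Your proposal is correct, but note that the paper itself gives no proof of this lemma: it is imported verbatim from the cited trapdoor-generation literature (reference [8], the ring analogue of Micciancio--Peikert), so there is nothing internal to compare against. Your argument --- the one-line substitution $\mathbf{a}_0^T\mathbf{R} + (h\mathbf{g}^T - \mathbf{a}_0^T\mathbf{R}) = h\mathbf{g}^T$ for correctness, plus the observation that $\mathbf{a}$ is the image of $(\mathbf{a}_0^T, -\mathbf{a}_0^T\mathbf{R})$ under the fixed translation $(\mathbf{u},\mathbf{v}) \mapsto (\mathbf{u}, \mathbf{v} + h\mathbf{g}^T)$, which preserves uniformity, statistical distance, and (being efficiently invertible) computational indistinguishability --- is exactly the standard proof of this statement. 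You also correctly locate where the actual cryptographic content lives: discharging the hypothesis on $(\mathbf{a}_0^T, -\mathbf{a}_0^T\mathbf{R})$ is precisely the job of the Regularity Lemma, which the paper states separately as its next lemma, so your division of labor matches the paper's own structure.
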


\begin{lemma}[{\sf Regularity Lemma} \cite{8}]
Let \(a_i \xleftarrow{\$} \mathcal{R}_q \text{ and } r_i\xleftarrow{\$} \chi \text{ for } i= 1, \ldots, l\), where $\chi$ is a distribution over $\mathcal{R}_q$. Then we obtain that the statistical distance between \(\mathbf{b} = \sum_{i=1}^{l} a_i r_i\) and the uniform distribution over $\mathcal{R}_q$ is $2^{-\Omega(n)}$. We can further extend this lemma for a vector of elements from $\mathcal{R}_q$.
\end{lemma}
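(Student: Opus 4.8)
The plan is to prove the statement through the \emph{leftover hash lemma} (LHL), viewing $h_{\mathbf{a}}(\mathbf{r}) = \sum_{i=1}^l a_i r_i$ as a hash family keyed by $\mathbf{a}=(a_1,\dots,a_l)\in\mathcal{R}_q^l$ and evaluated on the short input $\mathbf{r}=(r_1,\dots,r_l)$ drawn from $\chi^l$. First I would invoke the standard second-moment bound: for any random variable $X$ on a finite set $S$, the statistical distance satisfies $\Delta(X,U_S)\le \tfrac12\sqrt{|S|\cdot\mathrm{CP}(X)-1}$, where $\mathrm{CP}(X)$ is the collision probability. To reach the vector regularity that actually feeds Lemma~1 (namely that $(\mathbf{a}_0^T,-\mathbf{a}_0^T\mathbf{R})$ looks uniform), I would apply this to the \emph{joint} variable $(\mathbf{a},\mathbf{b})$. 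Taking two independent copies and conditioning on $\mathbf{a}=\mathbf{a}'$ (an event of probability $q^{-nl}$), the whole problem collapses to understanding the averaged quantity $\mathbb{E}_{\mathbf{d}}\big[\Pr_{\mathbf{a}}[\sum_i a_i d_i=0]\big]$, where $\mathbf{d}=\mathbf{r}-\mathbf{r}'$ is the difference of two $\chi$-samples.

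The two easy contributions are the following. When $\mathbf{d}=\mathbf{0}$, the sum vanishes automatically, contributing $\mathrm{CP}(\chi)^l$, which is bounded by $q^{-nl}$ (up to lower-order factors) whenever $\chi$ is spread enough; after multiplying back by the target-space size $q^{n(l+1)}$ and the factor $q^{-nl}$, this stays within $1+2^{-\Omega(n)}$ as soon as $l\ge 2$. When $\mathbf{d}\neq\mathbf{0}$ and \emph{some} coordinate $d_{i_0}$ is \emph{invertible} in $\mathcal{R}_q$, one may solve for $a_{i_0}$ while fixing the remaining keys, so $\sum_i a_i d_i$ is exactly uniform over $\mathcal{R}_q$ and $\Pr_{\mathbf{a}}[\sum_i a_i d_i=0]=q^{-n}$, reproducing the ideal term precisely.

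The main obstacle is the remaining case: $\mathbf{d}\neq\mathbf{0}$ but with \emph{every} coordinate a non-unit. Here the algebra of $\mathcal{R}_q=\mathbb{Z}_q[x]/(x^n+1)$ is decisive. Using the Chinese-remainder splitting $\mathcal{R}_q\cong\prod_j\mathbb{F}_{q^{d_j}}$ (with $\sum_j d_j=n$) induced by the factorization of $x^n+1$ modulo $q$, the probability $\Pr_{\mathbf{a}}[\sum_i a_i d_i=0]$ equals $q^{-n\cdot(\text{fraction of slots on which }\mathbf{d}\text{ is nonzero})}$, which can be as large as $q^{-1}$ when $\mathbf{d}$ is supported on a single linear slot. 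Consequently a worst-case bound over $\mathbf{d}$ fails outright, and one is forced to \emph{average}: the key estimate is that a sufficiently wide $\chi$ (e.g.\ a discrete Gaussian above the smoothing parameter of the relevant sub-ideals) places only $2^{-\Omega(n)}$ mass on any proper sub-ideal $\mathfrak{p}\subsetneq\mathcal{R}_q$, so the dangerous low-support differences $\mathbf{d}$ occur with negligible probability. Summing the per-slot contributions — or, alternatively, choosing $q$ so that $x^n+1$ has few irreducible factors (making non-units exponentially rare) — yields $q^{n}\cdot\mathbb{E}_{\mathbf{d}}[\Pr_{\mathbf{a}}[\sum_i a_i d_i=0]]\le 1+2^{-\Omega(n)}$. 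This averaging step, governed jointly by the geometry of $\chi$ and the splitting behaviour of $q$, is the crux of the whole argument; plugging it into the second-moment inequality gives $\Delta\big((\mathbf{a},\mathbf{b}),U\big)\le 2^{-\Omega(n)}$, and marginalizing over $\mathbf{a}$ delivers the stated bound on $\Delta(\mathbf{b},U_{\mathcal{R}_q})$.

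Finally, for the vector-valued extension quoted at the end — replacing the single output $u\in\mathcal{R}_q$ by a $k$-tuple, as needed to certify that $\mathbf{a}_1=h\mathbf{g}^T-\mathbf{a}_0^T\mathbf{R}$ is near-uniform in Lemma~1 — I would rerun the identical collision computation with target space $\mathcal{R}_q^k$ and $N=q^{\,n(l+k)}$, where a collision must now be avoided simultaneously in all $k$ output components. Either a union bound over the $k\le\mathrm{poly}(n)$ coordinates or, more cleanly, a direct application of the second-moment inequality over $S=\mathcal{R}_q^{\,l+k}$ absorbs this at the cost of only a polynomial factor sitting inside the exponentially small term, so the overall distance remains $2^{-\Omega(n)}$, completing the proof.
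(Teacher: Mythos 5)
The paper does not actually prove this lemma: it is imported verbatim from the cited reference \cite{8} and used as a black box (its only role here is to guarantee that $(\mathbf{a}_0^T,-\mathbf{a}_0^T\mathbf{R})$ in the {\sf ringGenTrap} lemma is close to uniform). So there is no in-paper proof to compare against; your proposal has to be judged against the source literature, and there it matches the standard argument: the known proofs of ring regularity (Micciancio's compact-knapsack regularity lemma and its discrete-Gaussian variants) proceed exactly as you do, via the second-moment/leftover-hash bound, the collision-probability decomposition conditioned on a key collision, the CRT splitting of $\mathcal{R}_q$ to isolate non-invertible differences, and an averaging argument over a sufficiently wide $\chi$. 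You also correctly observe what the paper's statement hides: for an \emph{arbitrary} distribution $\chi$ the claim is false (take $\chi$ concentrated at $0$), so wideness of $\chi$ is a necessary, unstated hypothesis.

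Two quantitative points in your sketch need repair, though neither breaks the architecture. First, in the $\mathbf{d}=\mathbf{0}$ case you bound $\mathrm{CP}(\chi)^l$ by $q^{-nl}$; for the $\chi=\mathcal{D}_{\mathcal{R},\sigma}$ with $\sigma=\mathcal{O}(\sqrt{\log n})$ that this paper actually uses, $\mathrm{CP}(\chi)\approx\sigma^{-n}\gg q^{-n}$, so that bound is unavailable. What is needed (and suffices) is $q^{n}\,\mathrm{CP}(\chi)^l\le 2^{-\Omega(n)}$, i.e.\ roughly $\sigma^{l}\ge c\,q$; this is where the choice $l+k=\Theta(\log q)$ is doing the work, not the condition $l\ge 2$. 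Second, ``$\chi$ places only $2^{-\Omega(n)}$ mass on any proper sub-ideal'' is not by itself sufficient: the contribution of a slot-set $T$ is $q^{D_T}\Pr[d\in I_T]^{l}$, summed over up to $2^{n}$ sets $T$, so for ideals of large index ($D_T$ close to $n$) a bare $2^{-\Omega(n)}$ bound must be calibrated against $q^{D_T}\le q^{n}=2^{n\log_2 q}$. You need either the sharper smoothing-type estimate $\Pr[d\in I_T]\lesssim q^{-D_T}$, or an explicit constant $c$ in the exponent with $c\,l>\log_2 q+\mathcal{O}(1)$, before the union over $T$ closes. With those two calibrations inserted, your argument is complete and coincides with the proof in the cited sources.
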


\begin{lemma}[{\sf ringInvert}$^{\mathcal{O}}(\mathbf{R}, \mathbf{a}, \mathbf{b})$ \cite{8}] 
Given as input an oracle $\mathcal{O}$ for inverting the function $\alpha_{\mathbf{g}}(s', \mathbf{e}') = \mathbf{g}s' + \mathbf{e'} \pmod q$ where $\mathbf{e}' \in \mathcal{R}^k$ is suitably small from a gaussian like distribution and $s' \in \mathcal{R}_q$, a vector of ring elements $\mathbf{a} \in \mathcal{R}_q^{l+k}$, a $\mathbf{g}$-trapdoor $\mathbf{R} \in R^{l \times k}$ for $\mathbf{a}$ with tag $h$, a vector $\mathbf{b} = \mathbf{a}s + \mathbf{e} \pmod q$ for any random $s \in \mathcal{R}_q$ and suitably small $\mathbf{e} \in \mathcal{R}^{l+k}$ coming from a narrow distribution over $\mathcal{R}_q^{l+k}$, the algorithm outputs $s$ and $\mathbf{e}$. 
\end{lemma}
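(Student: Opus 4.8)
The plan is to use the trapdoor $\mathbf{R}$ to collapse the structured inversion problem for $\mathbf{a}$ into a single instance of inverting the gadget map $\alpha_{\mathbf{g}}$, which can then be handed to the oracle $\mathcal{O}$; this is the ring adaptation of the gadget-trapdoor inversion of Micciancio et al. Following the output format of {\sf ringGenTrap}, write $\mathbf{a} = (\mathbf{a}_0^T, \mathbf{a}_1^T)^T$ with $\mathbf{a}_0 \in \mathcal{R}_q^l$, $\mathbf{a}_1 \in \mathcal{R}_q^k$, and partition the input and the (unknown) error accordingly as $\mathbf{b} = (\mathbf{b}_0^T, \mathbf{b}_1^T)^T$ and $\mathbf{e} = (\mathbf{e}_0^T, \mathbf{e}_1^T)^T$. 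The first step is simply to left-multiply by $[\,\mathbf{R}^T \mid \mathbf{I}_k\,]$, that is, to compute $\hat{\mathbf{b}} = \mathbf{R}^T\mathbf{b}_0 + \mathbf{b}_1 \in \mathcal{R}_q^k$.

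Next I would invoke the defining relation of a $\mathbf{g}$-trapdoor. Transposing $\mathbf{a}^T\begin{bmatrix}\mathbf{R}\\\mathbf{I}_k\end{bmatrix} = h\mathbf{g}^T$ yields $[\,\mathbf{R}^T \mid \mathbf{I}_k\,]\,\mathbf{a} = h\mathbf{g}$, so substituting $\mathbf{b} = \mathbf{a}s + \mathbf{e}$ gives
\[
\hat{\mathbf{b}} \;=\; [\,\mathbf{R}^T \mid \mathbf{I}_k\,](\mathbf{a}s + \mathbf{e}) \;=\; \mathbf{g}(hs) + \hat{\mathbf{e}}, \qquad \hat{\mathbf{e}} := \mathbf{R}^T\mathbf{e}_0 + \mathbf{e}_1 .
\]
Thus $\hat{\mathbf{b}} = \alpha_{\mathbf{g}}(hs, \hat{\mathbf{e}})$ is exactly a gadget instance with secret $hs \in \mathcal{R}_q$ and error $\hat{\mathbf{e}} \in \mathcal{R}^k$. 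Calling $\mathcal{O}$ on $\hat{\mathbf{b}}$ returns the pair $(hs, \hat{\mathbf{e}})$; since the tag $h$ is a known nonzero invertible element of $\mathcal{R}_q$, I recover $s = h^{-1}(hs)$ and then the full error $\mathbf{e} = \mathbf{b} - \mathbf{a}s \pmod q$, reduced to its centered representative. This yields both outputs and establishes correctness.

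The hard part will be ensuring the transformed error $\hat{\mathbf{e}}$ stays within the radius that $\mathcal{O}$ can decode, since $\mathbf{R}$ enters multiplicatively and amplifies the noise. Using the submultiplicativity of the largest singular value, $\|\mathbf{R}^T\mathbf{e}_0\| \le s_1(\mathbf{R})\,\|\mathbf{e}_0\|$, one obtains
\[
\|\hat{\mathbf{e}}\| \;\le\; s_1(\mathbf{R})\,\|\mathbf{e}_0\| + \|\mathbf{e}_1\| .
\]
Hence the hypothesis that $\mathbf{e}$ is ``suitably small'' must be quantified so that this bound lies below the gadget decoding radius (essentially $q/2$ per coordinate for the power-of-two $\mathbf{g}$), and the parameters $n, q, k$ together with the trapdoor quality $s_1(\mathbf{R})$ must be chosen consistently to guarantee it; controlling $s_1(\mathbf{R})$ is precisely what the {\sf ringGenTrap} construction and the {\sf Regularity Lemma} supply. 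A secondary point to check is that the ring map behaves coordinatewise just as in the integer case, so the oracle's correctness transfers; this follows from the coefficient embedding of $\mathcal{R}$ into $\mathbb{Z}^n$, under which $s_1(\mathbf{R})$ and all the norm bounds above are defined.
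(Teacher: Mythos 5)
The paper does not prove this lemma at all---it is stated as an imported preliminary from \cite{8}, so there is no internal proof to compare against. Your reconstruction is correct and is exactly the standard gadget-trapdoor inversion argument of that cited source, carried over to the ring setting: collapse to a gadget instance via $[\,\mathbf{R}^T \mid \mathbf{I}_k\,]$ using the transposed trapdoor relation, invoke the oracle $\mathcal{O}$ on $\mathbf{g}(hs)+\hat{\mathbf{e}}$, strip the (invertible) tag $h$, recover $\mathbf{e}=\mathbf{b}-\mathbf{a}s \pmod q$, and control the transformed error through $s_1(\mathbf{R})$ against the gadget decoding radius.
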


Given a trapdoor $\mathbf{R}$ for $\mathbf{a} \in \mathcal{R}_q^{l+k} \text{ and } u = \mathbf{a^T \cdot x} \pmod q$, the sampling algorithm in the following lemma finds the solution $\mathbf{x}$ from desired distribution. 

\begin{lemma}\label{lemma-ring-sample}({\sf ringSample}$^{\mathcal{O}}(\mathbf{R}, \mathbf{a_0}, h,  u, \sigma)$ \cite{8}) Given as input in offline mode (i) an oracle $\mathcal{O}(v)$ for gaussian sampling over a desired coset \(\Lambda^{v}_{q}(\mathbf{g}^T)\) with parameter $\sigma$, where $v\in \mathcal{R}_q$, (ii) a vector of ring elements $\mathbf{a_0} \in \mathcal{R}_q^l$, (iii) a trapdoor $\mathbf{R} \in \mathcal{R}^{l\times k}$, (iv) a gaussian parameter $\sigma$. In addition, given as input in online phase, (i) a non-zero tag $h\in \mathcal{R}_q$ defining \(\mathbf{a} = (\mathbf{a_0}^T, h\mathbf{g}^T - \mathbf{a}^T_0\mathbf{R})^T \in \mathcal{R}_q^{l+k}\), and (ii) a syndrome $u\in \mathcal{R}_q$, the algorithm outputs a vector $\mathbf{x}$ drawn from a distribution statistically close to $\mathcal{D}_{\Lambda^v_q(\mathbf{a_0}^T), \sigma'}$ for some Gaussian parameter $\sigma'$.
\end{lemma}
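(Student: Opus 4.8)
The plan is to adapt the \emph{perturb-then-sample} Gaussian preimage sampler of Micciancio and Peikert \cite{6} to the ideal-lattice setting, exploiting the trapdoor identity $\mathbf{a}^T\begin{bmatrix}\mathbf{R}\\\mathbf{I}_k\end{bmatrix}=h\mathbf{g}^T$. Concretely, the algorithm runs in two phases. In the offline phase, before $h$ and $u$ are revealed, it draws a perturbation $\mathbf{p}\in\mathcal{R}^{l+k}$ from a (non-spherical) discrete Gaussian whose covariance is the complementary matrix $\Sigma_p={\sigma'}^{2}\mathbf{I}-\sigma^2\begin{bmatrix}\mathbf{R}\\\mathbf{I}_k\end{bmatrix}\begin{bmatrix}\mathbf{R}^T & \mathbf{I}_k\end{bmatrix}$, where $\sigma'$ is the target width. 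In the online phase, given the tag $h$ and syndrome $u$, it forms the adjusted target $\bar{u}=h^{-1}\bigl(u-\mathbf{a}^T\mathbf{p}\bigr)\in\mathcal{R}_q$, queries the gadget oracle to obtain $\mathbf{z}\leftarrow\mathcal{O}(\bar{u})$ distributed as $\mathcal{D}_{\Lambda^{\bar u}_q(\mathbf{g}^T),\sigma}$, and outputs $\mathbf{x}=\mathbf{p}+\begin{bmatrix}\mathbf{R}\\\mathbf{I}_k\end{bmatrix}\mathbf{z}$.

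I would first establish correctness, namely that $\mathbf{x}$ lies in the intended coset. Since $\mathbf{z}\in\Lambda^{\bar u}_q(\mathbf{g}^T)$ gives $\mathbf{g}^T\mathbf{z}=\bar u \pmod q$, the trapdoor identity yields
\[
\mathbf{a}^T\mathbf{x}=\mathbf{a}^T\mathbf{p}+\mathbf{a}^T\begin{bmatrix}\mathbf{R}\\\mathbf{I}_k\end{bmatrix}\mathbf{z}=\mathbf{a}^T\mathbf{p}+h\mathbf{g}^T\mathbf{z}=\mathbf{a}^T\mathbf{p}+h\bar{u}=u\pmod q,
\]
so $\mathbf{x}\in\Lambda^{u}_q(\mathbf{a}^T)$; the invertibility of $h$ assumed in the hypothesis is exactly what makes $\bar u$, and hence this cancellation, well-defined.

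The core of the argument is the distributional claim. The correction term $\begin{bmatrix}\mathbf{R}\\\mathbf{I}_k\end{bmatrix}\mathbf{z}$ is the pushforward of the spherical gadget-Gaussian $\mathbf{z}$, hence a discrete Gaussian on a coset with approximate covariance $\sigma^2\begin{bmatrix}\mathbf{R}\\\mathbf{I}_k\end{bmatrix}\begin{bmatrix}\mathbf{R}^T & \mathbf{I}_k\end{bmatrix}$; convolving it with the perturbation $\mathbf{p}$, whose covariance $\Sigma_p$ is the complement, produces total covariance ${\sigma'}^{2}\mathbf{I}$. I would then invoke the discrete-Gaussian convolution theorem of Peikert to conclude that $\mathbf{x}$ is within negligible statistical distance of the spherical $\mathcal{D}_{\Lambda^{u}_q(\mathbf{a}^T),\sigma'}$. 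Applying it requires verifying two conditions: that $\sigma$ exceeds the smoothing parameter of the gadget lattice $\Lambda^{\perp}_q(\mathbf{g}^T)$, so that the oracle output and its image are genuinely Gaussian up to negligible error, and that $\Sigma_p$ is positive definite, which holds once $\sigma'\gtrsim\sigma\cdot s_1\!\bigl(\begin{bmatrix}\mathbf{R}\\\mathbf{I}_k\end{bmatrix}\bigr)$ --- this is precisely why the trapdoor quality is measured by $s_1(\mathbf{R})$.

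Finally, since every object lives over $\mathcal{R}=\mathbb{Z}[x]/(x^n+1)$, I would transfer these integer-lattice facts to the ring setting via the coefficient embedding, identifying each ring element with its length-$n$ coefficient vector and each ring multiplication with its anti-circulant matrix, so that $\mathbf{R}$ becomes an integer matrix in $\mathbb{Z}^{ln\times kn}$ and the covariance and convolution analysis reduces literally to the one on $\mathbb{Z}^{(l+k)n}$. I expect the distributional step to be the main obstacle: one must track carefully that the pushforward of the spherical gadget-Gaussian through $\begin{bmatrix}\mathbf{R}\\\mathbf{I}_k\end{bmatrix}$ has the claimed covariance up to smoothing error, and that $\Sigma_p$ stays positive definite in the stated parameter regime, whereas the algebraic correctness and the embedding are comparatively routine.
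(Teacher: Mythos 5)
The paper never proves this lemma: it imports it verbatim from \cite{8}, and your proposal is essentially the proof given in that source --- the perturb-then-sample algorithm (offline perturbation with complementary covariance $\Sigma_p$, online gadget-coset sampling via $\mathcal{O}$, recombination through the trapdoor identity) combined with Peikert's discrete-Gaussian convolution theorem and the coefficient embedding to reduce the ring setting to $\mathbb{Z}^{(l+k)n}$. Your derivation also implicitly repairs two slips in the statement: the output coset should read $\Lambda^{u}_q(\mathbf{a}^T)$ rather than $\Lambda^{v}_q(\mathbf{a_0}^T)$, and the tag $h$ must be invertible in $\mathcal{R}_q$ (as required in the paper's {\sf ringGenTrap} lemma), not merely non-zero, for the adjusted syndrome $h^{-1}(u-\mathbf{a}^T\mathbf{p})$ to exist.
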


% \begin{enumerate}
%     \item Choose a collection of linearly independent vectors of ring elements $\mathbf{R} = (\mathbf{r}_1, \dots, \mathbf{r}_k) \in \mathcal{R}^{l \times k}$ from distribution $\chi^{l \times k}$.
%     \item Output $\mathbf{a} = (\mathbf{a}_0^T, h \mathbf{g}^T - \mathbf{a}_0^T \mathbf{R})^T \in \mathcal{R}_q^{l+k}$ and trapdoor $\mathbf{R} \in \mathcal{R}^{l \times k}$.
% \end{enumerate}

% \begin{enumerate}
%     \item Get $(s', \mathbf{e}') \leftarrow \mathcal{O}([\mathbf{R}^T \mid \mathbf{I}_k] \mathbf{b})$.
%     \item Return $s = h^{-1}s'$ and $\mathbf{e} = \mathbf{b} - \mathbf{a}s$ (interpreted as a vector in $\mathcal{R}^{l+k}$ where each entry has coefficients in $\left[-\frac{q}{2}, \frac{q}{2}\right]$).
% \end{enumerate}

% Offline phase:
% \begin{itemize}
%     \item choose fresh perturbations $\mathbf{p_1}\xleftarrow{\$}\chi_1^l \text{ and } \mathbf{p_2}\xleftarrow{\$}\chi_2^l$ for some distributions $\chi_1$ and $\chi_2$ over $\mathcal{R}$;
%     \item compute $w_0 = \mathbf{a^T_0(p_1 - Rp_2)} \in \mathcal{R}_q$ and $w_1 = \mathbf{g^T p_2} \in \mathcal{R}_q$.
% \end{itemize}
% Online phase:
% \begin{itemize}
%     \item Let $v\xleftarrow{\$} h^{-1}(u - w_0) - w_1 = h^{-1}(u-\mathbf{a^T p}) \in \mathcal{R}_q$ and choose $\mathbf{z} \in \mathcal{D}_{\Lambda^v_q(\mathbf{g}^T), \sigma}$ by calling $\mathcal{O}(v)$.
%     \item Return $\mathbf{x} \leftarrow \begin{bmatrix}
% \mathbf{p_1} \\ 
% \mathbf{p_2} 
% \end{bmatrix} + \begin{bmatrix}
% \mathbf{R} \\ 
% \mathbf{I}_k 
% \end{bmatrix} \mathbf{z}$.
% \end{itemize}

\subsection{Definition of Strong Designated Verifier Signature (SDVS)}
A SDVS is a collection of the following algorithms. We follow the syntax and definition of SDVS from \cite{1}. Refer to Figure \ref{fig:sdvs} for an illustrative summary.

\begin{figure}
    \centering
    \includegraphics[width=0.95\linewidth]{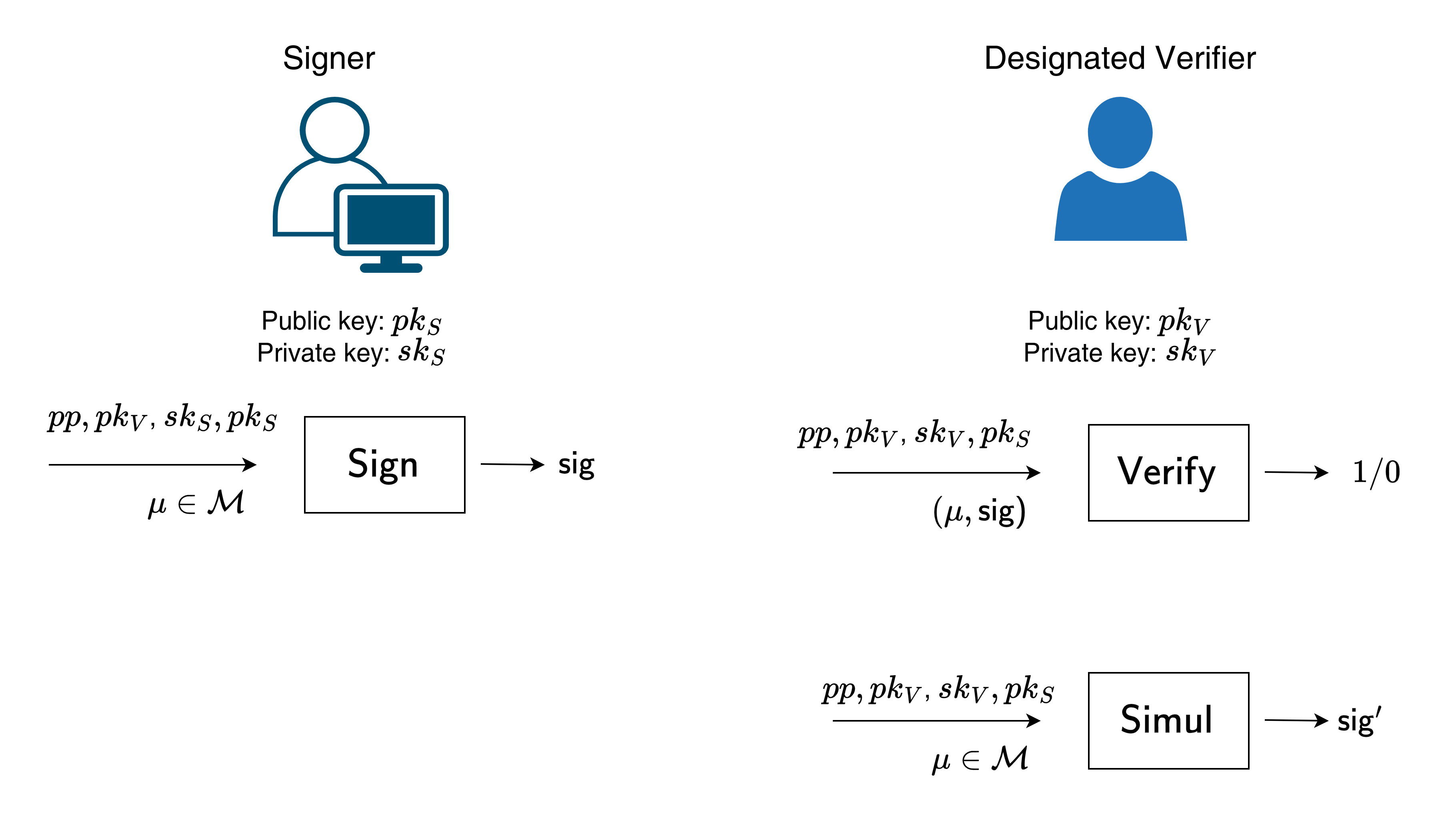}
    \caption{High level overview of SDVS}
    \label{fig:sdvs}
\end{figure}
\begin{description}
    \item[{\sf Setup}$(1^n)$:] It is a probabilistic algorithm. On input a security parameter $n$, it outputs the public parameters $pp$.
    
    \item[{\sf SigKeyGen}$(pp)$:] It is a probabilistic (or deterministic) algorithm. On input $pp$, it outputs the public key $pk_S$ and private key $sk_S$ for a signer $S$.
    
    \item[{\sf VerKeyGen}$(pp)$:] It is a probabilistic (or deterministic) algorithm. On input $pp$, it outputs the public key $pk_V$ and private key $sk_V$ for a designated verifier $V$.
    
    \item[\sf {Sign}$(pp, sk_S, pk_S, pk_V, \mu)$:] It is a probabilistic algorithm. On input $pp$, a private key $sk_S$, the public keys $pk_S$ and $pk_V$ of the signer $S$ and a designated verifier $V$ and a message $\mu \in \mathcal{M}$, it outputs a real designated verifier signature ${\sf sig} \in \mathcal{S}$.
    
    \item[{\sf Verify}$(pp, sk_V, pk_S, pk_V, {\sf sig}, \mu)$:] It is a deterministic algorithm. On input $pp$, a private key $sk_V$, the public keys $pk_S$ and $pk_V$ of the signer $S$ and a designated verifier $V$, a message $\mu \in \mathcal{M}$, and a signature $\sf sig \in \mathcal{S}$, it outputs a boolean decision $b$: $b = 1$ denotes accepting or $b = 0$ denotes rejecting it.
   
    \item[{\sf Simul}$(pp, sk_V, pk_S, pk_V, \mu)$:] It is a probabilistic algorithm. On input $pp$, a private key $sk_V$, the public keys $pk_S$ and $pk_V$ of the signer $S$ and a designated verifier $V$, and a message $\mu \in \mathcal{M}$, it outputs a simulated designated verifier signature $\sf sig' \in \mathcal{S}$.
\end{description}

\subsection{Security Model}
An SDVS scheme must satisfy unforgeability over chosen message attack, non-transferability, privacy of the signer's identity, and non-delegatability. The security definitions are given below.

\subsubsection*{Strong Unforgeable Chosen Message Attack (\textit{SU-CMA}) Security\label{sec:su-cma}}

For our SDVS scheme, we define strong unforgeability, which is a stronger notion than existential unforgeability. It states that any PPT adversary, without the private key of either the signer or the designated verifier, can not provide a valid designated verifier signature for a message that has even been queried in either of the signing or simulating queries. We define this unforgeable security, where the attacker can choose the message in an adaptive manner, by the following game between a challenger \(\mathcal{C}\) and a PPT adversary \(\mathcal{A}\):

\begin{description}
    \item[{\sf Setup}:] \(\mathcal{C}\) runs \({\sf Setup}(1^n)\) to get \(pp\), \({\sf SignKeyGen}(pp)\), and \({\sf VerKeyGen}(pp)\) to get \((pk_S, sk_S)\) and \((pk_V, sk_V)\) for the signer \textit{S} and the designated verifier \textit{V}. \(\mathcal{C}\) keeps  \((sk_S, sk_V)\) in secret and sends \((pp, pk_S, pk_V)\) to \(\mathcal{A}\).
    
    \item[{\sf Signing Queries}:] \(\mathcal{A}\) chooses a message \(\mu_i\) adaptively, \(\mathcal{C}\) runs \({\sf Sign}(pp, sk_S, pk_S, pk_V \allowbreak , \mu_i)\) to obtain a signature \({\sf sig}_i\) and sends it to \(\mathcal{A}\).
    
    \item[{\sf Simulating Queries}:] \(\mathcal{A}\) chooses a message \(\mu_i\) adaptively, \(\mathcal{C}\) runs \({\sf Simul}(pp, sk_V, pk_S \allowbreak, pk_V, \mu_i)\) to obtain a signature \({\sf sig}'_i\) and sends it to \(\mathcal{A}\). \(\mathcal{A}\) is allowed to query the signing and the simulating oracle at most \(q_s = poly(n)\) times.

    \item[\sf {Verification Queries}:]  \(\mathcal{A}\) chooses a message-signature pair \((\mu_i, {\sf sig}_i)\) adaptively, \(\mathcal{C}\) runs \break \({\sf Verify}(pp, sk_V, pk_S, pk_V,{\sf sig}_i, \mu_i)\) to obtain a decisional value 1 for valid and 0 otherwise, and sends it to \(\mathcal{A}\). \(\mathcal{A}\) is allowed to query the verification oracle at most \(q_v = poly(n)\) times.

    \item[{\sf Output}:] \(\mathcal{A}\) outputs a message-signature pair \((\mu^*, {\sf sig}^*)\), and wins if the following conditions hold-
      \begin{itemize}
          \item \({\sf Verify}(pp, sk_V, pk_S, pk_V,{\sf sig}^*, \mu^*) = 1\).
          \item \(({\sf sig}^*, \mu^*) \neq ({\sf sig}_i, \mu_i)\), \(\forall i \in \{1, 2, \ldots, q_s\}\).
      \end{itemize}
\end{description}

The advantage of \(\mathcal{A}\) in the above game, i.e., the probability of $\mathcal{A}$ wins, is denoted by $Adv_{SDVS, \mathcal{A}}^{SU-CMA}(n)$, and the probability is taken over the randomness of \(\mathcal{A}\), the algorithms \({\sf Sign, Simul}\), and \({\sf Verify}\). We say that the SDVS scheme is unforgeable if $Adv_{SDVS, \mathcal{A}}^{SU-CMA}(n)$ is negligible in the security parameter \(n\).

\subsubsection*{Non-Transferability (\textit{NT}) Security}
In the context of the SDVS scheme, non-transferability means that the designated verifier should not be able to transfer the conviction of the validity of a real designated verifier signature. This is accomplished by the \({\sf Sign, Simul}\) algorithm, using which the designated verifier can generate signatures that are indistinguishable from those generated by the actual signer. The \textit{NT} security of our SDVS is defined as follows:
\begin{description}
    \item[{\sf Setup}:] The challenger \(\mathcal{C}\) runs \({\sf Setup}(1^n)\) to get \(pp\), then runs \({\sf SignKeyGen}(pp)\), and \({\sf VerKeyGen}(pp)\) to get \((pk_S, sk_S)\) and \((pk_V, sk_V)\) for the signer \textit{S} and the designated verifier \textit{V}. {\(\mathcal{C}\) sends  \(( pk_S, pk_V)\) to \(\mathcal{A}\)}.
    
    \item[{\sf Challenge}:] \(\mathcal{A}\) chooses a message \(\mu^*\) adaptively, \(\mathcal{C}\) runs \({\sf Sign}(pp, sk_S, pk_S, pk_V, \mu^*)\) to obtain a signature \({\sf sig}^*_0\), then runs \({\sf Simul}(pp, sk_V, pk_S, pk_V, \mu^*)\) to obtain a signature \({\sf sig}^*_1\). \(\mathcal{C}\) chooses a random bit \(b\in  \{0,1\}\) and sends \({\sf sig}^*_b\) to \(\mathcal{A}\).
    
    \item[{\sf Output}:] \(\mathcal{A}\) outputs a bit \(b^*\), and wins the challenge if \(b^* = b\).
\end{description}

The advantage of \(\mathcal{A}\) in the above game is defined by $Adv_{SDVS, \mathcal{A}}^{NT}(n) = |Pr[b^* = b]- 1/2|$, and the probability is taken over the randomness of \(\mathcal{A}\), the algorithms \({\sf Sign}\) and \({\sf Simul}\). We say that the SDVS scheme is non-transferable if $Adv_{SDVS, \mathcal{A}}^{NT}(n)$ is negligible in the security parameter \(n\).

\subsubsection*{Privacy of Signer's Identity (\textit{PSI}) Security}
\textit{PSI} security provides privacy to the identity of the signer. Given a designated verifier signature and two valid signing public keys, any adversary/eavesdropper in between the signer and the designated verifier can not determine which private key, corresponding to the given public keys, has been used to create the signature with a non-negligible probability. This feature is, precisely, made possible by involving the private key of the designated verifier in the verification process. The \textit{PSI} security of the SDVS scheme is defined by the following game between a PPT adversary \(\mathcal{A}\) and a challenger \(\mathcal{C}\):
\begin{description}
    \item[{\sf Setup}:] \(\mathcal{C}\) runs \({\sf Setup}(1^n)\) to get \(pp\), runs \({\sf SignKeyGen}(pp)\) twice to get \((pk_{S_0}, sk_{S_0})\), \((pk_{S_1}, sk_{S_1})\), and runs \({\sf VerKeyGen}(pp)\) to get \((pk_V, sk_V)\) for the signers $S_0, S_1$ respectively and the designated verifier \textit{V}. \(\mathcal{C}\) keeps  \(sk_V\) in secret and sends \((pp, pk_{S_0}, sk_{S_0}, pk_{S_1}, sk_{S_1}, pk_V)\) to \(\mathcal{A}\).
    
    \item[{\sf Simulating Queries}:] \(\mathcal{A}\) chooses a message \(\mu_i\) adaptively and a bit \(b\in \{0,1\}\), \(\mathcal{C}\) runs \break \({\sf Simul}(pp, sk_V, pk_{S_b}, pk_V, \mu_i)\) to obtain a signature \({\sf sig}'_{i,b}\) and sends it to \(\mathcal{A}\). \(\mathcal{A}\) is allowed to query the simulating oracle at most \(q_s = poly(n)\) times.
    
    \item[\sf {Verification Queries}:] \(\mathcal{A}\) chooses a message-signature pair \((\mu_i, {\sf sig}_i)\) and a bit \(b\in \{0,1\}\) adaptively,  \(\mathcal{C}\) runs  \({\sf Verify}(pp, sk_V, pk_{S_b}, pk_V,{\sf sig}_i, \mu_i)\) to obtain a decisional value 1 for valid and 0 otherwise, and sends it to \(\mathcal{A}\). \(\mathcal{A}\) is allowed to query the verification oracle at most \(q_v = poly(n)\) times.

    \item[{\sf Challenge}:] \(\mathcal{A}\) chooses a message \(\mu^*\), \(\mathcal{C}\) runs \({\sf Sign}(pp, sk_{S_0}, pk_{S_0}, pk_V, \mu^*)\) to obtain a signature \({\sf sig}_0^*\), then runs \({\sf Sign}(pp, sk_{S_1}, pk_{S_1}, pk_V, \mu^*)\) to obtain a signature \({\sf sig}_1^*\), chooses a random bit \(b\in \{0,1\}\), and sends \({\sf sig}_b^*\) to \(\mathcal{A}\).
    
    \item[{\sf Output}:] \(\mathcal{A}\) outputs a bit \(b^* \in \{0,1\}\), and wins if the following conditions hold-
      \begin{itemize}
          \item \(b^* = b\),.
          \item \(({\sf sig_b}^*, \mu^*) \) was not queried in the verification queries for \(b^* \in \{0,1\}\).
      \end{itemize}
\end{description}
The advantage of \(\mathcal{A}\) is defined as \(Adv_{SDVS, \mathcal{A}}^{PSI}(n) = |Pr[b^* = b] - 1/2\), and we say that the SDVS scheme holds \textit{PSI} security if \(Adv_{SDVS, \mathcal{A}}^{PSI}(n)\) is negligible in \(n\).

\subsubsection*{Non-Delegatability (\textit{ND}) Security}
In SDVS scheme, \textit{ND} security tells that without knowing the private key of either the signer or the designated verifier, no one can generates a valid designated verifier signature. In other words, a \textit{ND} secure SDVS accomplish the proof of knowledge of either signer's private key or designated verifier's private key in a non-interactive manner. This security is defined in the form of a game between an extractor \(\mathcal{E}\) and a black-box \(\mathcal{A}\) that produces a valid signature:

\begin{description}
    \item[{\sf Setup}:] The extractor \(\mathcal{E}\) runs \({\sf Setup}(1^n)\) to get \(pp\) and sends it to \(\mathcal{A}\). \(\mathcal{A}\) then sends either \(pk_S\) or \(pk_V\) to \(\mathcal{E}\).

    \item[{\sf --}] If \(\mathcal{A}\) sends \(pk_S\) to \(\mathcal{E}\), then the game proceeds as follows-

    \item[{\sf VerKeyGen}:] \(\mathcal{E}\) runs \({\sf VerKeyGen}(pp)\) to obtain \((pk_V, sk_V)\), keeps \(sk_V\) in secret and sends \(pk_V\) to \(\mathcal{A}\).
    
    \item[{\sf Simulating Queries}:] \(\mathcal{A}\) chooses a message \(\mu_i\) adaptively , \(\mathcal{E}\) runs \({\sf Simul}(pp, sk_V, pk_{S} \allowbreak , pk_V, \mu_i)\) to obtain a signature \({\sf sig}_{i}\) and sends it to \(\mathcal{A}\). \(\mathcal{A}\) is allowed to query the simulating oracle at most \(q_s = poly(n)\) times.
    
    \item[\sf {Verification Queries}:] \(\mathcal{A}\) chooses a message-signature pair \((\mu_i, {\sf sig}_i)\) adaptively, \(\mathcal{E}\) runs\break \({\sf Verify}(pp, sk_V, pk_{S}, pk_V,{\sf sig}_i, \mu_i)\) to obtain a decisional value, 1 for valid and 0 otherwise, and sends it to \(\mathcal{A}\). \(\mathcal{A}\) is allowed to query the verification oracle at most \(q_v = poly(n)\) times.

    \item[{\sf Challenge}:] \(\mathcal{E}\) chooses  a message \(\mu^*\) and \(\mathcal{A}\) sends a corresponding designated verifier signature \({\sf sig}^*\) to \(\mathcal{E}\).
    
    \item[{\sf Output}:] \(\mathcal{E}\) outputs the private key \(sk_S\) of the signer.

    \item[{\sf --}] If \(\mathcal{A}\) sends \(pk_V\) to \(\mathcal{E}\), then the game proceeds as follows-

    \item[{\sf SigKeyGen}:] \(\mathcal{E}\) runs \({\sf SigKeyGen}(pp)\) to obtain \((pk_S, sk_S)\), keeps \(sk_S\) in secret and sends \(pk_S\) to \(\mathcal{A}\).

    \item[\sf {Signing Queries}:] \(\mathcal{A}\) chooses a message \(\mu_i\) adaptively, \(\mathcal{E}\) runs \({\sf Sign}(pp, sk_S, pk_{S}, pk_V, \mu_i)\) to obtain a signature  \({\sf sig}_i\) and sends it to \(\mathcal{A}\). \(\mathcal{A}\) is allowed to query the signing oracle at most \(q_s = poly(n)\) times.

    \item[{\sf Challenge}:] \(\mathcal{E}\) chooses  a message \(\mu^*\) and \(\mathcal{A}\) sends a corresponding designated verifier signature \({\sf sig}^*\) to \(\mathcal{E}\).

    \item[{\sf Output}:] \(\mathcal{E}\) outputs the private key \(sk_V\) of the designated verifier.

\end{description}

The advantage of \(\mathcal{E}\) in the above game, in time \(t\)  is denoted by $Adv_{SDVS, \mathcal{E}}^{ND}(n)$, and the probability is taken over the randomness of \(\mathcal{A}\), the algorithms \({\sf Sign, Simul}\), and \({\sf Verify}\). We say that the SDVS scheme is non-delegatable against \(\mathcal{A}\), if $Adv_{SDVS, \mathcal{E}}^{ND}(n) \geq poly(\epsilon')$ and \(t<poly(t')\) where \(\mathcal{A}\) can produce a designated verifier signature in time \(t'\) with a probability \(\epsilon'\).

\section{Proposed Ideal Lattice-based SDVS Scheme}
\noindent {\it High Level Overview.} In this section, we describe the proposed construction (called \sdvs) in detail. We first give a high-level overview of the design. \sdvs is a privacy-preserving digital signature protocol built upon the hardness of mathematical problems over \emph{ideal lattices}. Unlike traditional signature schemes, \sdvs ensures that only a designated verifier can check the validity of a signature, while also allowing the verifier to simulate signatures that are computationally indistinguishable from real ones. \sdvs consists of six algorithms: ({\sf Setup} \textsf{SigKeyGen}, \textsf{VerKeyGen}, \textsf{Sign}, \textsf{Verify}, \textsf{Simul}). In the \textsf{Setup} phase, public parameters are initialized based on underlying ring structures and hash function instantiation. The algorithms \textsf{SigKeyGen} and \textsf{VerKeyGen} generate key pairs for the signer and the verifier, respectively. The signing algorithm \textsf{Sign} enables the signer to produce a short signature for a message, while the \textsf{Verify} algorithm utilizes the trapdoor to validate the signature’s correctness. The \textsf{Simul} algorithm allows the verifier to simulate signatures without interacting with the signer, thus providing non-transferability. We provide a detailed description of the algorithms below. 
\begin{description}
    \item[{\sf Setup}:] Let $n$ be a security parameter, and let $q=poly(n)$ be the underlying ring modulus. Let $k$ and $\kappa$ denote the hash parameters such that $k=\lceil\log q\rceil$  and $2^\kappa \binom{k}{\kappa}\geq 2^{100}$. Let {$d=q^{1/\gamma}$} with $\gamma>1$, $\sigma=\mathcal{O}(\sqrt{\log n})$, and $\chi=\mathcal{D}_{\mathcal{R},\sigma}$. $\mathcal{R}=\mathbb{Z}[x]/<x^n+1>$. $\mathcal{R}_q=\mathbb{Z}_q[x]/<x^n+1>$. $l$ is chosen suct that $l+k = \mathcal{O}(\log q)$. Let $\eta$ be a constant positive real value. Then, the \({\sf Setup}\) phase is executed as follows.
    \begin{itemize}
        \item Sample $\mathbf{a}\xleftarrow{\$} \mathcal{R}_q^{l+k}$.
        \item Select a collision resistant hash function $\mathcal{H}: \{0,1\}^* \rightarrow \{c, c\in \mathcal{R}_q, 0<||c||<\sqrt{\kappa}\}$.
    \end{itemize}
The public parameter ${ pp}$ is set to be $\{n, q, k, \kappa, d, \sigma, \chi, \mathcal{R}, \mathcal{R}_q,l, \mathbf{a}, \mathcal{H}\}$ 
    \item[ ($pk_{S}, sk_{S}$)$ \leftarrow$ {\sf SignKeyGen}({ pp}):]
    On input the public parameters { pp}, the following steps are performed to generate the public key and secret key for the signer.
    \begin{itemize}
        \item  Sample $\mathbf{s} \xleftarrow{\$} \{-d, \ldots,0,\ldots,d\}^{l+k}$
        \item Define $t =\mathbf{a}\cdot \mathbf{s} \pmod q\in \mathcal{R}_q$ 
        \item Output $pk_{S}=t$ and $sk_{S}=\mathbf{s}$
    \end{itemize}
   \item[($pk_{V}, sk_{V}$)$ \leftarrow$ {\sf VerKeyGen}({ pp}):] Similar to the previous algorithm, on input the public parameters { pp}, the following operations are executed to produce the public key and secret key for the designated verifier.
    \begin{itemize}
        \item  Sample $\mathbf{\tilde{b}_0, \tilde{b}_1} \xleftarrow{\$} \mathcal{R}_q^{l}$
        \item Execute {\sf ringGenTrap} with inputs $\mathbf{\tilde{b}_0}$ and $h_0=1$ to output ${\mathbf{{b}_0}}\in \mathcal{R}_q^{l+k}$ and a trapdoor $\mathbf{R}_{\mathbf{{b}_0}} \in \mathcal{R}_q^{l\times k}$.
         \item Execute {\sf ringGenTrap} with inputs $\mathbf{\tilde{b}_1}$ and $h_1=1$ to output ${\mathbf{{b}_1}}\in \mathcal{R}_q^{l+k}$ and a trapdoor $\mathbf{R}_{\mathbf{{b}_1}} \in \mathcal{R}_q^{l\times k}$.
         \item Output $pk_V=(\mathbf{{b}_0}, \mathbf{{b}_1})$ and $sk_V=(\mathbf{R}_{\mathbf{{b}_0}}, \mathbf{R}_{\mathbf{{b}_1}})$
    \end{itemize}
    \item[{\sf sig} $\leftarrow$ {\sf Sign}$({pp}, sk_{S}, pk_{S}, pk_V, \mu\in \mathcal{M}):$] On input ${pp}$, $sk_{S}, pk_{S}$, and $pk_V$, the signer employs the algorithm {\sf Sign} to generate a signature on the message $\mu$ as follows.
     \begin{itemize}
         \item Samples $s\xleftarrow{\$} \mathcal{R}_q$, $\mathbf{e}, \mathbf{y} \xleftarrow{\$} \mathcal{D}_{\mathcal{R}_q^{l+k},\sigma}$
         \item Computes  $\mathbf{c_0}={\mathbf{{b}_0}} \cdot s + \mathbf{e} \in \mathcal{R}_q^{l+k} $
         \item Computes $c_1 =\mathcal{H}(\mathbf{a}\cdot \mathbf{y} + \mathbf{{b}_1} \cdot \mathbf{e}|| t || s|| \mu) \in \mathcal{R}_q$
         \item $\mathbf{z}=\mathbf{s} \cdot c_1+\mathbf{y} \pmod q \in \mathcal{R}_q^{l+k}$
         \item Output ${\sf sig}=(\mathbf{c_0}, c_1, \mathbf{z})$ 
     \end{itemize}
     \item[ $1/0$ $\leftarrow$ {\sf Verify}$(pp, sk_V, pk_V, pk_S, {\sf sig}, \mu)$:] On input ${pp}$, $sk_V$, $pk_S$ and $pk_V$, the  {\sf Verify} outputs $1$ if ${\sf sig}$ is a valid signature on the message $\mu$; otherwise outputs $0$.
       \begin{itemize}
         \item Parse {\sf sig}
         \item Check $0< ||\mathbf{z}||< \eta \sigma\sqrt{l+k}$
        \item Execute {\sf ringInvert} with input $\mathbf{R}_{\mathbf{{b}_0}}$, $\mathbf{c_0}, \mathbf{b_0}$ to output $s, \mathbf{e}$.
        \item  Check whether 
        $c_1=\mathcal{H}({\mathbf{a}\cdot \mathbf{z}-\mathbf{t}c_1+\mathbf{b_1}\cdot \mathbf{e} || \mathbf{t} || s || \mu)}$
        \item  Output $1$ if all the above checks are satisfied, otherwise outputs $0$. 
       \end{itemize}
    \item[{\sf sig'} $\leftarrow$ {\sf Simul}$({pp}, sk_V, pk_V, pk_S, \mu):$] On input ${ pp}$, $sk_V$, $pk_S$ and $pk_V$, a designated verifier utilizes {\sf Simul} to output a simulated signature {\sf sig'} on the message $\mu$
    \begin{itemize}
            \item Sample $s', u'\xleftarrow{\$} \mathcal{R}_q, \mathbf{z'} \xleftarrow{\$} \mathcal{D}_{\mathcal{R}_q^{l+k}, \sigma}$
            \item Let $c'_1 = \mathcal{H}(u'|| t || s' || \mu) \in \mathcal{R}_q$
            \item Run ${\sf RingSample}(\mathbf{b_1}, h_1, \mathbf{R_{b_1}}, u'-\mathbf{a \cdot z} + \mathbf{t}\cdot c'_1, \sigma)$ to get a short $\mathbf{e'} \in \mathcal{D}_{\mathcal{R}_q^{l+k},\sigma}$.
            \item Define $\mathbf{c'_0} = \mathbf{b_0}\cdot s' + \mathbf{e'}$ mod $q$.
            \item Output ${\sf sig'} = (\mathbf{c'_0}, c'_1, \mathbf{z'})$ using rejection sampling technique.
        \end{itemize}
        
\end{description}

\subsubsection*{Correctness}
In the verification phase, the verifier checks the validity of the given message-signature pair $(\mu, {\sf sig})$ by taking $({pp}, sk_V, pk_V, pk_S)$ as input. The correctness analysis is given below.
\noindent If the signature is coming from the actual signer i.e., ${\sf sig} = (\mathbf{c_0}, c_1, \mathbf{z})$, then the verifier first checks if $0< ||\mathbf{z}||< \eta \sigma\sqrt{l+k}$. Then using the subroutine ${\sf ringInvert}$, it obtains $s, \mathbf{e} \leftarrow {\sf ringInvert}(\mathbf{R_{b_0}}, \mathbf{b_0}, \mathbf{c_0})$. Finally, the verifier computes 
   \begin{align}
    c_1^* &= \mathcal{H}(\mathbf{a \cdot z}- \mathbf{t}c_1 + \mathbf{b_1 \cdot e}|| \mathbf{t} || s || \mu) \notag \\
          &= \mathcal{H}(\mathbf{a} \cdot (\mathbf{s}c_1 + \mathbf{y})- \mathbf{a \cdot s}c_1 + \mathbf{b_1 \cdot e} || \mathbf{t} || s || \mu)  \notag \\
          &= \mathcal{H}(\mathbf{a} \cdot \mathbf{s}c_1 + \mathbf{a\cdot y}- \mathbf{a \cdot s}c_1 + \mathbf{b_1 \cdot e} || \mathbf{t} || s || \mu) \notag \\
          &= \mathcal{H}(\mathbf{a\cdot y} + \mathbf{b_1 \cdot e} || \mathbf{t} || s || \mu) \notag \\
          &= c_1 \notag
    \end{align}
    
\noindent If the signature is generated from the ${\sf Simul}$ algorithm by the designated verifier  i.e., ${\sf sig'} = (\mathbf{c'_0}, c'_1, \mathbf{z'})$ then the verifier first checks whether $0< ||\mathbf{z'}||< \eta \sigma\sqrt{l+k}$ and then computes $s', \mathbf{e'} \leftarrow {\sf ringInvert}(\mathbf{R_{b_0}}, \mathbf{b_0}, \mathbf{c'_0})$. Now, from Lemma \ref{lemma-ring-sample}, we get that $\mathbf{b_1\cdot e'}= u'-\mathbf{a\cdot z + t}c'_1$ mod $q$. In the end, the verifier computes,
    \begin{align}
    \tilde{c'}_1 &= \mathcal{H}(\mathbf{a \cdot z'}- \mathbf{t}c'_1 + \mathbf{b_1 \cdot e'} || \mathbf{t} || s' || \mu) \notag \\
          &= \mathcal{H}(\mathbf{a \cdot z'}- \mathbf{t}c'_1 + u'-\mathbf{a\cdot z' + t}c'_1 || \mathbf{t} || s' || \mu)  \notag \\
          &= \mathcal{H}(u' || \mathbf{t} || s' || \mu)  \notag \\
          &= c'_1 \notag
    \end{align}

\section{Efficiency Analysis}
In this section, we present a comprehensive comparative analysis of \sdvs with the existing state-of-the-art PQC-based SDVS. We first present the communication and storage overhead of \sdvs.
\begin{description}
    \item[-] The private key $sk_S$ of the signer is of bit size $(l+k)\log(2d+1) = \mathcal{O}(\log q)$.
    \item[-] The private key $sk_V$ of the designated verifier is of bit size $2(l\times k)(n\log q) = 2n(\log q)^2 = \mathcal{O}(n\log q)$.
    \item[-] The signature ${\sf sig} = (\mathbf{c_0, c_1, z})$ is of bit size $(l+k)n \log q + n\log q + (l+k)n \log q = 2(l+k) n\log q + n\log q = \mathcal{O}(n\log q)$.
\end{description}
We now compare \sdvs with the existing lattice-based SDVS schemes proposed in \cite{2,14,4,1}. The comparative analysis considers sizes of the secret key of the signer and verifier, signature sizes. In addition, we also provided an assessment of security properties such as {\it SU-CMA, NT, PSI}, and {\it ND}. The results of the comparison is provided in Table \ref{tab:compare}. As we can see from the results mentioned in Table \ref{tab:compare}, \sdvs outperforms existing schemes in the communication and storage overhead, while attaining all the desired security properties. Notably, \sdvs achieves {\it SU-CMA, NT, PSI} and {\it ND} security properties, similar to \cite{1}, but with significantly lower keys and signature sizes. Specifically, \textsf{LaSDVS} attains a key and signature size of $\mathcal{O}(\log q)$ and $\mathcal{O}(n \log q)$, respectively, in contrast to earlier works whose sizes are polynomial in the security parameter $n$. This efficiency is primarily due to two design choices; (i) the use of a gadget-based trapdoor~\cite{6} in the ideal lattice setting, and (ii) the avoidance of the traditional GPV trapdoor~\cite{7} used in standard lattice constructions.

Wang et al.~\cite{2} incorporated the Bonsai Tree technique to support trapdoor delegation and achieve \textit{NT} and \textit{PSI}, but their construction remains inefficient due to the use of GPV trapdoors. It results in large $\mathcal{O}(n^2)$-sized keys and signatures. In Noh et al.~\cite{14}, although the authors utilized a different trapdoor mechanism (from~\cite{6}) along with a lattice-based chameleon hash function, the use of high-dimensional matrices resulted in a signature and key size of $\mathcal{O}(n^2)$. Cai et al.~\cite{4} proposed an optimization using a filtering technique to reduce the signature size, but the improvement was marginal, and the scheme still results in a quadratic signature size. Similarly, Zhang et al.~\cite{1} introduced the first lattice-based SDVS supporting \textit{ND}, but their use of GPV trapdoors in standard lattices again incurs $\mathcal{O}(n^2)$ complexity in all parameters. Thus, \sdvs is the most efficient SDVS till date in the PQC. \sdvs provides the smallest private key size and signature size among the existing PQC-based SDVS design. \sdvs is compact and well suited for resource constrained devices.

\begin{table}[h]
    \centering
    \caption{Comparison of lattice-based SDVS schemes}
    \begin{tabular}{|c|c|c|c|c|c|c|c|c|}
         \hline
         \textbf{Schemes}   &$\mathbf{|sk_S|}$   &$\mathbf{|sk_V|}$   &$\mathbf{\sf |sig|}$ 
         &\textit{CMA}   &\textit{NT}    &\textit{PSI}   &\textit{ND}   &\textbf{Model}\\
         \hline
         Wang et al. \cite{2}  &$\mathcal{O}(n^2)$    &$\mathcal{O}(n^2)$ &$\mathcal{O}(n^2)$    &EU    &Yes    &Yes   &No    &ROM\\
         \hline
         Noh et al. \cite{14}    &$\mathcal{O}(n^2)$     &$\mathcal{O}(n^2)$     &$\mathcal{O}(n^2)$   &EU  &Yes    &Yes    &No     &SM\\
         \hline
         Cai et al. \cite{4}     &$\mathcal{O}(n^2)$     &$\mathcal{O}(n^2)$     &$\mathcal{O}(n^2)$   &EU   &Yes    &Yes    &No     &ROM\\
         \hline
         Zhang et al. \cite{1}    &$\mathcal{O}(n^2)$     &$\mathcal{O}(n^2)$     &$\mathcal{O}(n^2)$   &SU   &Yes    &Yes    &Yes     &ROM\\
         \hline
         \sdvs    &$\mathcal{O}(\log q)$     &$\mathcal{O}(n\log q)$     &$\mathcal{O}(n\log q)$   &SU   &Yes    &Yes    &Yes     &ROM\\
         \hline
    \end{tabular}
    \label{tab:compare}
\end{table}

\section{Security Analysis of \sdvs}
\begin{theorem}
\sdvs is strongly unforgeable under the adaptive chosen-message attack (\textit{SU-CMA}).
\end{theorem}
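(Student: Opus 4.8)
The plan is to work in the random oracle model, treating $\mathcal{H}$ as a programmable random oracle, and to reduce \textit{SU-CMA} to the hardness of the {\sf Ring-SIS} problem on the public vector $\mathbf{a}$. I would build a reduction $\mathcal{B}$ that receives a {\sf Ring-SIS} challenge $\mathbf{a} \in \mathcal{R}_q^{l+k}$, plants it as the public parameter, samples $\mathbf{s} \xleftarrow{\$} \{-d,\dots,d\}^{l+k}$ and publishes $t = \mathbf{a}\cdot\mathbf{s}$ as $pk_S$, while generating the verifier keys $(pk_V, sk_V)$ honestly via {\sf ringGenTrap} so that $\mathcal{B}$ holds $\mathbf{R}_{\mathbf{b_0}}, \mathbf{R}_{\mathbf{b_1}}$. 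The crucial point is that $\mathbf{s}$ is used only to define $t$ and never to answer a query. Verification and simulation queries are answered by running {\sf Verify} and {\sf Simul} directly, since $\mathcal{B}$ owns $sk_V$. A signing query on $\mu$ is answered by the honest-verifier simulation: sample $s,\mathbf{e}$, form $\mathbf{c_0}=\mathbf{b_0}s+\mathbf{e}$, sample a challenge $c_1$ and $\mathbf{z}\xleftarrow{\$}\mathcal{D}_{\mathcal{R}_q^{l+k},\sigma}$, set the commitment $w=\mathbf{a}\cdot\mathbf{z}-tc_1+\mathbf{b_1}\cdot\mathbf{e}$, and program $\mathcal{H}(w\,||\,t\,||\,s\,||\,\mu)=c_1$. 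Invoking the rejection-sampling step (as in {\sf Simul}) makes the distribution of $\mathbf{z}$ statistically close to that of genuine {\sf Sign} outputs and, critically, independent of $\mathbf{s}$; programming aborts only if the point was queried before, which is negligible.

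Next I would apply the General Forking Lemma. Conditioned on $\mathcal{A}$ outputting a valid forgery $(\mu^*,{\sf sig}^*=(\mathbf{c_0}^*,c_1^*,\mathbf{z}^*))$, verification forces $c_1^*=\mathcal{H}(w^*\,||\,t\,||\,s^*\,||\,\mu^*)$, where $s^*,\mathbf{e}^*$ are recovered from $\mathbf{c_0}^*$ by {\sf ringInvert} and $w^*=\mathbf{a}\cdot\mathbf{z}^*-tc_1^*+\mathbf{b_1}\cdot\mathbf{e}^*$. In the principal case, the winning hash value is a fresh direct query of $\mathcal{A}$; I rewind $\mathcal{A}$ to that query and reprogram a fresh challenge $c_1'^*\neq c_1^*$, obtaining a second valid forgery with the \emph{same} commitment $w^*$. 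Since the commitments agree, the $\mathbf{b_1}\cdot\mathbf{e}^*$ terms cancel and $\mathbf{a}(\mathbf{z}^*-\mathbf{z}'^*)=t(c_1^*-c_1'^*)=\mathbf{a}\cdot\mathbf{s}(c_1^*-c_1'^*)$, so that $\mathbf{v}=(\mathbf{z}^*-\mathbf{z}'^*)-\mathbf{s}(c_1^*-c_1'^*)$ satisfies $\mathbf{a}\cdot\mathbf{v}=0$.

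I would then verify that $\mathbf{v}$ is an admissible {\sf Ring-SIS} solution. Its norm is bounded using $\|\mathbf{z}^*\|,\|\mathbf{z}'^*\|<\eta\sigma\sqrt{l+k}$, the challenge bound $\|c_1^*-c_1'^*\|\le 2\sqrt{\kappa}$, and the smallness of $\mathbf{s}$, placing it below the {\sf Ring-SIS} threshold for the chosen parameters. For nonzero-ness I invoke the Regularity Lemma: it guarantees $t=\mathbf{a}\cdot\mathbf{s}$ is statistically close to uniform, so the map $\mathbf{s}\mapsto t$ is highly non-injective and the conditional min-entropy of $\mathbf{s}$ given $t$ is large; since the forger's entire view is independent of the particular preimage used, $\Pr[\mathbf{v}=0]=\Pr[\mathbf{z}^*-\mathbf{z}'^*=\mathbf{s}(c_1^*-c_1'^*)]\le 1/2$. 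Threading these probabilities through the Forking Lemma bounds $Adv_{SDVS,\mathcal{A}}^{SU-CMA}(n)$ by a polynomial factor times the advantage against {\sf Ring-SIS}, proving the theorem.

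I expect the main obstacle to be the \emph{strong} part of the statement, namely forgeries that reuse the commitment of an earlier signing answer rather than issuing a fresh oracle query. When such a forgery keeps $\mathbf{c_0}^*$ (hence $\mathbf{e}^*$) equal to the queried one, freshness forces $\mathbf{z}^*\neq\mathbf{z}_i$ and yields $\mathbf{a}(\mathbf{z}^*-\mathbf{z}_i)=0$ directly, a clean {\sf Ring-SIS} solution; but when $\mathcal{A}$ instead alters $\mathbf{c_0}^*$ while preserving $w^*$, the extracted relation involves $\mathbf{b_1}$, for which $\mathcal{B}$ holds a trapdoor, so it is not by itself a contradiction. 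Ruling out this sub-case---by exploiting the binding provided by including $s$ in the hash input together with the collision resistance of $\mathcal{H}$, and a refined min-entropy argument---is the delicate step that separates strong unforgeability from a trivial collision, and is where I anticipate the bulk of the technical work, alongside the faithful, abort-free simulation of the signing oracle.
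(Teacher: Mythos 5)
Your overall template (program the random oracle, answer signing queries without $\mathbf{s}$, fork on the winning hash query, extract a short kernel vector) matches the paper's, but your choice of where to plant the {\sf Ring-SIS} challenge creates a gap that you yourself flag and do not close --- and it is not a residual technicality, it is the step on which the reduction stands or falls. Because you generate $(\mathbf{b_0},\mathbf{b_1})$ honestly and keep \emph{both} trapdoors, any extracted relation that involves $\mathbf{b_1}$ is worthless: holding $\mathbf{R_{b_1}}$, one can manufacture short solutions of $\mathbf{a}\cdot\mathbf{u}+\mathbf{b_1}\cdot\mathbf{w}=0$ at will (fix any short $\mathbf{u}$ and use {\sf ringSample} to get a short preimage $\mathbf{w}$ of $-\mathbf{a}\cdot\mathbf{u}$ under $\mathbf{b_1}$), so exhibiting such a relation contradicts nothing. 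The adversary is entitled to force exactly this situation: the hash input binds only the aggregate $w=\mathbf{a}\cdot\mathbf{z}-tc_1+\mathbf{b_1}\cdot\mathbf{e}$ together with $s$, not $\mathbf{c_0}$ or $\mathbf{e}$ individually, so both in your forking case and in your commitment-reuse case the two transcripts may carry $\mathbf{e}^*\neq\mathbf{e}'^*$, and the difference equation then necessarily retains the term $\mathbf{b_1}\cdot(\mathbf{e}^*-\mathbf{e}'^*)$. In particular your statement that ``since the commitments agree, the $\mathbf{b_1}\cdot\mathbf{e}^*$ terms cancel'' is unjustified in general, and your proposed remedies (collision resistance of $\mathcal{H}$, a refined min-entropy argument about $\mathbf{s}$) do not help: the colliding hash inputs really are equal as ring elements, so there is no collision to exploit, and no entropy statement about $\mathbf{s}$ constrains $\mathbf{e}$.

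The paper closes this hole structurally rather than analytically. Its simulator receives a {\sf Ring-SIS} instance on the \emph{concatenated} vector $(\mathbf{a}|\mathbf{b})$, sets $\mathbf{b_1}=\mathbf{b}$ so that it has no trapdoor for $\mathbf{b_1}$ (the verifier secret is taken to be $(\mathbf{R_{b_0}},\emptyset)$), and answers both signing and simulating queries with a ``hybrid'' signer that uses no private key at all: it samples $s,\mathbf{e},c_1,\mathbf{z}$ itself and programs $\mathcal{H}(\mathbf{a}\cdot\mathbf{z}-tc_1+\mathbf{b_1}\cdot\mathbf{e}\,\|\,t\,\|\,s\,\|\,\mu)=c_1$, so the missing $\mathbf{R_{b_1}}$ is never needed for the simulation. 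With this embedding, every branch of the case analysis --- reuse of a hash value programmed during signing/simulating queries (the paper's Case I) or forking on a direct hash query (Case II) --- yields a short nonzero $\mathbf{v}$ with $(\mathbf{a}|\mathbf{b})\cdot\mathbf{v}=0$, \emph{whether or not} the $\mathbf{e}$-components cancel; the $\mathbf{b_1}$-part of the relation is part of the challenge rather than noise. The sub-case you defer as ``the bulk of the technical work'' is precisely what this different embedding makes vanish; as written, your reduction does not prove the theorem.
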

\begin{proof}
  We show that \sdvs is \textit{SU-CMA} secure. Refer to Section \ref{sec:su-cma} for the detailed definition of \textit{SU-CMA} game. We assume that there exists a PPT adversary \(\mathcal{A}\) who makes queries to the signer and the random oracle \(\mathcal{H}\) and then performs an adaptive chosen-message attack on our proposed SDVS scheme. The adversary, at last, is able to output a forged designated verifiable signature \({\sf sig}^*\) on a message \(\mu^*\) with a non-negligible probability \(\epsilon\), then we show that there exists a simulator \(\mathcal{B}\) that can exploit \(\mathcal{A}'s\) success probability to solve an instance of the {\sf ring-SIS} problem \((\mathbf{a|b})\cdot \mathbf{v} = 0 \pmod q\). We define a hybrid SDVS, to provide a valid simulation of the random oracle to the adversary \(\mathcal{A}\) in the proof, in which no private keys are used and the output of which can not be distinguished from a real SDVS. In both, signing and simulating algorithms, a hybrid SDVS is generated as follows 
\begin{enumerate}
    \item Sample \(s\xleftarrow{\$} \mathcal{R}_q\) and \(\mathbf{e}\xleftarrow{\$} \mathcal{D}_{{\mathcal{R}_q^{l+k}}, \sigma}\).
    \item Define \(\mathbf{c_0} = \mathbf{b}s + \mathbf{e} \pmod q\).
    \item Sample \(c_1\xleftarrow{\$} \mathcal{R}_q\) and \(\mathbf{z}\xleftarrow{\$} \mathcal{D}_{{\mathcal{R}_q^{l+k}}, \eta\sigma}\).
    \item Program the oracle \(\mathcal{H}(\mathbf{a \cdot z} - tc_1 + \mathbf{b_1 \cdot e} || t || s || \mu) = c_1\).
    \item Output \({\sf sig} = (\mathbf{c_0}, c_1, \mathbf{z})\).
\end{enumerate}

\noindent Throughout the proof, let \(\mathcal{D_{\mathcal{H}}} =\{ c_1 \in \mathcal{R}_q : 0< ||c_1 ||< \sqrt{\kappa}\)\} denotes the range of the random oracle \(\mathcal{H}\) and \(w\) be bound on the number of times the oracle is called or programmed during \(\mathcal{A}'s\) attack i.e. the random oracle query can be made by the adversary \(\mathcal{A}\) directly or the random oracle can be programmed by the signing and simulating algorithms when \(\mathcal{A}\) asks for a signature on some adaptively chosen messages. The interaction between \(\mathcal{A}\) and \(\mathcal{B}\) are as follows:
\begin{itemize}
    \item\textsf{Setup:} Given \(\mathbf{a, b} \in \mathcal{R}_q^{l+k}\), \(\mathcal{B}\) computes the following steps
    \begin{enumerate}
        \item Sample \(\mathbf{s}\xleftarrow{\$}\{-d, \ldots, 0, \ldots, d\}^{l+k}\) and define \(\mathbf{t}= \mathbf{a\cdot s} \pmod q\).
        \item Execute {\sf ringGenTrap} with inputs $\mathbf{\tilde{b}_0}$ and $h_0=1$ to output ${\mathbf{{b}_0}}\in \mathcal{R}_q^{l+k}$ with a trapdoor $\mathbf{R}_{\mathbf{{b}_0}} \in \mathcal{R}_q^{l\times k}$.
        \item Sample random coins \(\phi\) for \(\mathcal{A}\) and \(\Phi\) for \(\mathcal{B}\).
        \item Sample uniformly random \(r_1, r_2, \dots, r_w \xleftarrow{\$} \mathcal{D_H}\), which will correspond to the responses of the random oracle.
        \item Define \((pk_S, sk_S) = (t, \mathbf{s})\) and \((pk_V, sk_V) = (\mathbf{b_0}, \mathbf{b_1}=\mathbf{b}, \mathbf{R_{b_0}}, \emptyset)\).
        \item Keep \((\mathbf{s, R_{b_0}}, r_1, r_2, \dots, r_w, \Phi)\) in secret, and send the public parameters \(pp = (\mathbf{a, b_0, b_1}, t, \phi)\) to \(\mathcal{A}\).
    \end{enumerate}

    \item\textsf{Hash Queries:} Given \((\mathbf{a\cdot y + b_1 \cdot e} ||  t || s || \mu)\), the simulator \(\mathcal{B}\) executes the following steps-
    \begin{enumerate}
        \item Check whether a corresponding hash output \(c\) is stored in the hash list \(l_{\mathcal{H}}\). If yes, return it directly.
        \item Else, sample a random \(r_i \xleftarrow{\$} \{r_1, r_2, \dots, r_w\}\) that has not been used yet. \break Store \((\mathbf{a\cdot y + b_1 \cdot e} || t || s || \mu || c = r_i)\) in \(l_{\mathcal{H}}\) and return it to \(\mathcal{A}\).
    \end{enumerate}

    \item\textsf{Signing Queries:} Given a message $\mu$, \(\mathcal{B}\) executes the following steps
    \begin{enumerate}
        \item Run the signing algorithm in Hybrid SDVS using the random coins \(\Phi\) to produce a signature \({\sf sig} = (\mathbf{c_0}, c_1, \mathbf{z})\).
        \item Store the hash inputs \((\mathbf{a \cdot z} - tc_1 + \mathbf{b_1 \cdot e}|| t || s  || \mu)\) and output \(c_1\) into \(l_{\mathcal{H}}\).
        \item Return \({\sf sig}\) to \(\mathcal{B}\).
    \end{enumerate}
    
    \item\textsf{Simulating Queries:} Given a message \(\mu\), \(\mathcal{B}\) does as in the \(\textsf{Signing Queries}\) and returns a signature \({\sf sig} = (\mathbf{c_0}, c_1, \mathbf{z})\) to \(\mathcal{B}\).

    \item\textsf{Verification Queries:} Given a signature \({\sf sig} = (\mathbf{c_0}, c_1, \mathbf{z})\) and a message \(\mu\), \(\mathcal{B}\) executes the following steps-
    \begin{enumerate}
        \item Check whether \(0< ||\mathbf{z}||< \eta \sigma\sqrt{l+k}\) where \(1<\eta<2\) is a constant.
        \item Run \({\sf ringInvert}\) with the inputs \(\mathbf{c_0, b_0}\), and \(\mathbf{R_{b_0}}\) to output \(s, \mathbf{e}\).
        \item Check whether \((\mathbf{a \cdot z} - tc_1 + \mathbf{b_1 \cdot e} || t || s || \mu|| c_1)\) is stored in the hash list \(l_{\mathcal{H}}\).
        \item Output 1 if all the above are satisfied, otherwise 0.
    \end{enumerate}

    \item\textsf{Output:} With a probability \(\epsilon\), \(\mathcal{A}\) outputs a message-signature pair \((\mu^*, {\sf sig}^* \allowbreak =(\mathbf{c_0}^*, c^*_1, \mathbf{z}^*))\) which satisfies \(0< ||\mathbf{z}^*||< \eta \sigma\sqrt{l+k}\) and \(c^*_1 = \mathcal{H}(\mathbf{a \cdot z^*} - tc^*_1 + \mathbf{b_1 \cdot e^*}|| t || s^* || \mu^*)\) where \(\mathbf{e}^*\in \mathcal{R}^{l+k}_q\) and \(s^*\in \mathcal{R}_q\) had been used in \(\mathbf{b_0}^*\).
\end{itemize}

With a probability \(1-1/|\mathcal{D_{H}}|\), \(c_1^*\) must be one of the \(r_i \in \{r_1, r_2, \dots, r_w\}\). The success probability of \(\mathcal{A}\) in the forgery with the condition that \(c_1^*\) must be one of the \(r_i's\), is at least \(\epsilon-1/|\mathcal{D_{H}}|\). If \(c_1^*= r_i \in \{r_1, r_2, \dots, r_w\}\), then there will be two cases:
\begin{description}
    \item[Case I] If \(c_1^*= r_i\) was programmed during the \textsf{Signing Queries} or the \textsf{Simulating Queries}, then we analyse this as follows. Assume that the simulator \(\mathcal{B}\) programmed the random oracle \(c^*_1 = \mathcal{H}(\mathbf{a \cdot z} - tc^*_1 + \mathbf{b_1 \cdot e}|| t|| s || \mu)\) when signing a message \(\mu\). Since the adversary outputs a valid forgery \(({\sf sig}^*=(\mathbf{c_0}^*, c^*_1, \mathbf{z}^*))\) for \(\mu^*\), we have \(\mathcal{H}(\mathbf{a \cdot z} - tc^*_1 + \mathbf{b_1 \cdot e}|| t|| s || \mu) = \mathcal{H}(\mathbf{a \cdot z^*} - tc^*_1 + \mathbf{b_1 \cdot e^*}|| t|| s^* || \mu^*)\). Thus according to the collision-resistance property of \(\mathcal{H}\), \(\mathcal{B}\) computes \(\mathbf{e}, \mathbf{e}^*\) from the values \(\mathbf{c_0}, \mathbf{c_0}^*\) using the subroutine 
\({\sf ringInvert}\) respectively.So, \(\mathcal{B}\) computes 
\[\mathbf{a \cdot z} - tc^*_1 + \mathbf{b_1 \cdot e} = \mathbf{a \cdot z} - tc^*_1 + \mathbf{b \cdot e} = \mathbf{a \cdot z^*} - tc^*_1 + \mathbf{b_1 \cdot e^*}\]
\[\implies \mathbf{a(z-z^*+b(e-e^*))= 0}\pmod q\]
\[\implies \mathbf{(a|b)\begin{bmatrix}
\mathbf{z-z^*} \\ 
\mathbf{e-e^*} 
\end{bmatrix}= 0} \pmod q\]

Let \(\mathbf{v = \begin{bmatrix}
\mathbf{z-z^*} \\ 
\mathbf{e-e^*} 
\end{bmatrix}}\), it can be easily verified that \(\mathbf{v\neq 0}\), because if it is 0 then the forged signature \({\sf sig^*}\) is exactly same as the old signature \({\sf sig}\). Also, since \(0< ||\mathbf{z}^*||, ||\mathbf{z}||< \eta \sigma\sqrt{l+k}\) and \(0< ||\mathbf{e}^*||, ||\mathbf{e}||< \sigma\sqrt{l+k}\), we have that \(0< ||\mathbf{v}||< 2\sigma\sqrt{(\eta^2+1)l+k}=  \mathcal{O}(n)\).

    \item[Case II] If \(c_1^* = r_i\) was a response to the \textsf{Hash Queries}, then we analyse this as follows. The simulator \(\mathcal{B}\) records the forged signature \({\sf sig}^*= (\mathbf{c_0}^*, c_1^*, \mathbf{z}^*)\) given by the adversary \(\mathcal{A}\) on the message \(\mu^*\) and generates new random values \(r'_i, r'_{i+1}, \ldots, r'_w \allowbreak \xleftarrow{\$} \mathcal{D_H}\). The simulator \(\mathcal{B}\) then runs again with inputs \((\mathbf{a, b_0, b_1, R_{b_0}}, t, r_1, \ldots, r_{i-1},\allowbreak  r'_i, r'_{i+1}, \ldots, r'_w, \phi, \Phi)\). By the General Forking Lemma, \(\mathcal{B}\) observes the probability that \(r'_i\neq r_i\)  and \(\mathcal{A}\) uses the random oracle response in the forgery is at least \(\epsilon^*\) and hence with the probability \(\epsilon^*\), \(\mathcal{A}\) outputs a new signature \({\sf sig} =  (\mathbf{b_0}, r'_i, \mathbf{z})\) of the message \(\mu^*\) and \((\mathbf{a\cdot z^*} - tc_1^* + \mathbf{b_1\cdot e^*}|| t|| s^*) = (\mathbf{a\cdot z} - tc_1 + \mathbf{b_1\cdot e}|| t|| s)\) where \(c_1^* = r_i, c_1= r'_i\) and \(s=s^*\). \(\mathcal{B}\) also computes \(\mathbf{e}\) and \(\mathbf{e^*}\) from the values \(\mathbf{c_0}\) and \(\mathbf{c^*_0}\). Thus by putting the values \(t = \mathbf{a\cdot s} \pmod q\), we have 

\[(\mathbf{a\cdot z}- t c_1 +\mathbf{b_1\cdot e} )= (\mathbf{a\cdot z}- t c_1 +\mathbf{b \cdot e}) = (\mathbf{a\cdot z^*}- t c^*_1 +\mathbf{b \cdot e^*})\]

\[\implies \mathbf{a\cdot (z-z^*+ s}c_1^*-\mathbf{s}c_1)+\mathbf{b \cdot (e - e^*) = 0} \pmod q\]

\[\implies \mathbf{(a|b)\begin{bmatrix}
\mathbf{z-z^* + s}(c_1^* - c_1) \\ 
\mathbf{e-e^*} 
\end{bmatrix}= 0} \pmod q\]

Let \(\mathbf{v = }\begin{bmatrix}
\mathbf{z-z^* + s}(c_1^* - c_1) \\ 
\mathbf{e-e^*} 
\end{bmatrix}\), since \(||\mathbf{e^*}||, ||\mathbf{e}|| \leq \sigma \sqrt{l+k};\;||\mathbf{z^*}||, ||\mathbf{z}||\leq \eta \sigma \allowbreak \sqrt{l+k};\; ||\mathbf{s}c_1^*||, ||\mathbf{s}c_1||\leq d\kappa \sqrt{l+k}\), we have that \(||\mathbf{v||}\leq 2\sqrt{l+k}\sqrt{(\eta\sigma+d\kappa)^2 + \sigma^2} = \mathcal{O}(n)\).  

\end{description} 
\end{proof}

\begin{theorem}
    \sdvs is Non-Transferable
\end{theorem}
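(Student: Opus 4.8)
The plan is to establish non-transferability by showing that, for every message $\mu$ and every honestly generated pair of key tuples, the output distribution of $\textsf{Sign}$ is statistically close to that of $\textsf{Simul}$. Once this is proven, the challenge bit $b$ is information-theoretically (almost) hidden, so that any adversary $\mathcal{A}$ given $(pk_S,pk_V)$ satisfies $Adv_{SDVS,\mathcal{A}}^{NT}(n)=|Pr[b^*=b]-1/2|\le\mathrm{negl}(n)$. I would carry this out through a short sequence of hybrid distributions that morph a real signature $(\mathbf{c_0},c_1,\mathbf{z})$ into a simulated one $(\mathbf{c'_0},c'_1,\mathbf{z'})$, bounding the statistical gap introduced at each step.

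First I would treat the $\mathbf{z}$-component. In $\textsf{Sign}$ we have $\mathbf{z}=\mathbf{s}c_1+\mathbf{y}$ with $\mathbf{y}\sim\mathcal{D}_{\mathcal{R}_q^{l+k},\sigma}$, whereas $\textsf{Simul}$ draws $\mathbf{z'}$ directly from a discrete Gaussian. Invoking the rejection-sampling step employed by the scheme (reflected in the norm check $0<\|\mathbf{z}\|<\eta\sigma\sqrt{l+k}$), the accepted $\mathbf{z}$ is within negligible statistical distance of a centred discrete Gaussian that is independent of the secret $\mathbf{s}$ and of $c_1$, matching the law of $\mathbf{z'}$. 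Next I would address the hash argument. By the Regularity Lemma, since $\mathbf{a}$ is uniform and $\mathbf{y}$ (likewise $\mathbf{e}$, hence $\mathbf{b_1}\cdot\mathbf{e}$) is Gaussian, the quantity $\mathbf{a}\cdot\mathbf{y}+\mathbf{b_1}\cdot\mathbf{e}$ lies within $2^{-\Omega(n)}$ of uniform on $\mathcal{R}_q$; this is exactly the law of the freshly sampled $u'$ in $\textsf{Simul}$, so $c_1$ and $c'_1$ are identically distributed for matching $(t,s,\mu)$ inputs to $\mathcal{H}$.

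Finally I would handle $\mathbf{c_0}$. In $\textsf{Sign}$, $\mathbf{c_0}=\mathbf{b_0}s+\mathbf{e}$ with $\mathbf{e}$ Gaussian, and the correctness computation shows the effective hash preimage equals $\mathbf{a}\cdot\mathbf{z}-\mathbf{t}c_1+\mathbf{b_1}\cdot\mathbf{e}$. In $\textsf{Simul}$, the error $\mathbf{e'}$ is produced by $\textsf{ringSample}$ so that $\mathbf{b_1}\cdot\mathbf{e'}=u'-\mathbf{a}\cdot\mathbf{z'}+\mathbf{t}c'_1$; by Lemma~\ref{lemma-ring-sample} this $\mathbf{e'}$ is statistically close to $\mathcal{D}_{\mathcal{R}_q^{l+k},\sigma}$ while satisfying the linear constraint. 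Hence $\mathbf{c'_0}=\mathbf{b_0}s'+\mathbf{e'}$ shares the law of $\mathbf{c_0}$, and the imposed constraint guarantees the verification identity holds for the simulated triple exactly as for the real one. Chaining the three negligible statistical gaps shows the two output distributions are $2^{-\Omega(n)}$-close, which proves the claim.

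The main obstacle will be controlling the joint, rather than merely marginal, distribution of $(\mathbf{c_0},c_1,\mathbf{z})$: the three components are correlated through the shared randomness $\mathbf{e},\mathbf{y}$ and the hash relation, so I must verify that $\textsf{ringSample}$ returns an $\mathbf{e'}$ that is simultaneously (i) statistically Gaussian and (ii) consistent with the already-fixed $u',\mathbf{z'},c'_1$, which is precisely what Lemma~\ref{lemma-ring-sample} provides over the coset $\Lambda^{v}_q(\mathbf{b_1}^T)$. A secondary technical point is reconciling the Gaussian widths (the $\eta\sigma$ appearing after rejection sampling on $\mathbf{z}$ versus the $\sigma$ used in $\textsf{Simul}$) and checking that $\sigma$ exceeds the smoothing parameter of the relevant lattices, which is what makes both the Regularity Lemma and the preimage-sampling lemma applicable with the stated $2^{-\Omega(n)}$ bounds.
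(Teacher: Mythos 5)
Your proposal is correct and follows essentially the same route as the paper's own proof: both argue statistical indistinguishability of the \textsf{Sign} and \textsf{Simul} output distributions component-wise, using rejection sampling for the $\mathbf{z}$-part, uniformity of the hash output for $c_1$ versus $c_1'$, and the statistical closeness of the \textsf{ringSample} output $\mathbf{e}'$ to a discrete Gaussian (Lemma~\ref{lemma-ring-sample}) for $\mathbf{c_0}$ versus $\mathbf{c'_0}$. Your treatment is in fact somewhat more careful than the paper's, since you explicitly invoke the Regularity Lemma to match $\mathbf{a}\cdot\mathbf{y}+\mathbf{b_1}\cdot\mathbf{e}$ with the uniform $u'$ and you flag the joint-distribution and Gaussian-width issues that the paper passes over.
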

\begin{proof}
    According to the defined security model for non-transferability of the SDVS scheme, a game between the PPT adversary $\mathcal{A}$ and the simulator \(\mathcal{B}\) is defined as follows
 \begin{itemize}
     \item\textsf{Setup:} \(\mathcal{B}\) computes the following steps.
     \begin{enumerate}
         \item Run \textsf{Setup(\(1^n\))} to output the public parameter \(pp\).
         \item Run \textsf{SignKeyGen(\(pp\))} to output \((pk_S, sk_S) = (t, \textsf{s})\) for the signer.
         \item Run \textsf{VerKeyGen\((pp)\)} to output $(pk_V, sk_V) = \mathbf{(b_0, b_1, R_{b_0}, R_{b_1})}$.
         \item Send \((pk_S, pk_V)\) to \(\mathcal{A}\).

     \end{enumerate}

    \item\textsf{Challenge:} For a given message \(\mu\) from the adversary \(\mathcal{A}\), \(\mathcal{B}\) computes the following steps.
    \begin{enumerate}
        \item Run \textsf{Sign\((pp,sk_S, pk_S, pk_V, \mu)\)} to output a real designated verifiable signature \({\sf sig}^{(0)} = (\mathbf{c_0}^{(0)}, c^{(0)}_1,  \mathbf{z}^{(0)})\).
        \item Run \textsf{Simul\((pp, pk_S, sk_V, pk_V, \mu)\)} to output a simulated designated verifiable signature \({\sf sig}^{(1)} = (\mathbf{c_0}^{(1)}, c^{(1)}_1,  \mathbf{z}^{(1)})\).
        \item Sample a random bit \(b\xleftarrow{\$}\{0, 1\}\) and send the message-signature pair \((\mu, {\sf sig}^{(b)})\) to \(\mathcal{A}\).

    \end{enumerate}

    \item\textsf{Output:} \(\mathcal{A}\) outputs a bit \(b^*\in \{0, 1\}\).
 \end{itemize}
In the above processes, by the property of hash function \(\mathcal{H}\), we assure that $c^{(0)}_1$ and $c^{(1)}_1$ are random in \(\mathcal{R}_q\). As the output of the real designated verifiable signature is done using the rejection sampling, the pairs \((c^{(0)}_1, \;\mathbf{z^{(0)}= s\cdot} c^{(0)}_1 + \mathbf{y})\) and \((c^{(1)}_1, \; \mathbf{z^{(1)}\xleftarrow{\$} \mathcal{D}_{\mathcal{R}^{l+k}_q, \sigma}})\) are within the statistical distance of \(2^{-\omega(log(l+k))}/M\). Furthermore, \(\mathbf{c^{(0)}_0 = b_0\cdot}s^{(0)} + \mathbf{e}^{(0)}\) and \(\mathbf{c^{(1)}_0 = b_0\cdot}s^{(1)} + \mathbf{e}^{(1)}\) where \(s^{(0)}, s^{(1)}\xleftarrow{\$} \mathcal{R}_q,\; \mathbf{e}^{(0)}\xleftarrow{\$} \mathcal{D}_{\mathcal{R}^{l+k}_q, \sigma}\), and \(\mathbf{e}^{(1)}\in \mathcal{D}_{\mathcal{R}^{l+k}_q, \sigma}\) is coming as an output from the {\sf RingSample} algorithm, hence \(\mathbf{c^{(0)}_0}\) and \(\mathbf{c^{(1)}_0}\) are statistically indistinguishable. %Thus, after getting the pair \((\mu, {\sf sig}^{(b)})\), the adversary \(\mathcal{A}\) run the \({\sf Verify}\) algorithm and get every steps of the algorithm satisfied.
The real \((\mu, {\sf sig}^{(0)})\) and the simulated \((\mu, {\sf sig}^{(1)})\) designated verifiable signatures are statistically indistinguishable to \(\mathcal{A}
\) and hence \(\mathcal{A}
\) can not determine who is the actual producer of the pair \((\mu, {\sf sig}^{(b)})\) which implies that the advantage of the adversary $\mathcal{A}$ is negligible.
\end{proof}

\begin{theorem}
\sdvs satisfies the \textit{PSI} security if the {\sf Ring-LWE} assumption holds.
\end{theorem}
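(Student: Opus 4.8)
The plan is to establish \textit{PSI} security by a reduction to the decisional {\sf Ring-LWE} problem. Suppose a PPT adversary $\mathcal{A}$ wins the \textit{PSI} game with non-negligible advantage $\epsilon$; I would construct a distinguisher $\mathcal{B}$ against {\sf Ring-LWE}. The guiding observation is that, among the three signature components $(\mathbf{c_0}, c_1, \mathbf{z})$, the only parts that depend on which signer $S_b$ produced the signature are the public key $t_b$ appearing inside the random-oracle preimage that defines $c_1$, and the secret key $\mathbf{s}_b$ appearing in $\mathbf{z}=\mathbf{s}_b c_1+\mathbf{y}$. Rejection sampling (as already invoked for the \textsf{Sign}/\textsf{Simul} outputs in the \textit{NT} proof) makes the distribution of $\mathbf{z}$ a discrete Gaussian $\mathcal{D}_{\mathcal{R}_q^{l+k},\sigma}$ that is statistically independent of $\mathbf{s}_b$, so $\mathbf{z}$ carries no signer information. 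It then remains to hide the preimage containing $t_b$; since that preimage also contains the {\sf Ring-LWE} secret $s$ encoded in $\mathbf{c_0}=\mathbf{b_0}\cdot s+\mathbf{e}$, the hardness of recovering $s$ is exactly what prevents $\mathcal{A}$ from ever evaluating $\mathcal{H}$ at that point and thereby learning $t_b$.

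Concretely, $\mathcal{B}$ receives a {\sf Ring-LWE} challenge $(\mathbf{b_0},\mathbf{c_0}^*)\in\mathcal{R}_q^{l+k}\times\mathcal{R}_q^{l+k}$, where $\mathbf{c_0}^*$ is either $\mathbf{b_0}\cdot s+\mathbf{e}$ or uniform. $\mathcal{B}$ uses $\mathbf{b_0}$ as the first component of $pk_V$, runs {\sf ringGenTrap} to produce $\mathbf{b_1}$ together with its trapdoor $\mathbf{R_{b_1}}$, generates both signer key pairs $(t_0,\mathbf{s}_0)$ and $(t_1,\mathbf{s}_1)$ honestly, and forwards $(pp,pk_{S_0},sk_{S_0},pk_{S_1},sk_{S_1},pk_V)$ to $\mathcal{A}$ as the \textit{PSI} game requires. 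Simulating queries are answered by running \textsf{Simul} with the trapdoor $\mathbf{R_{b_1}}$, which $\mathcal{B}$ possesses; crucially $\mathcal{B}$ deliberately does \emph{not} hold the trapdoor $\mathbf{R_{b_0}}$, since $\mathbf{b_0}$ is pinned to the challenge instance.

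The delicate point is answering verification queries and embedding the challenge without $\mathbf{R_{b_0}}$. For a verification query $(\mu_i,(\mathbf{c_0},c_1,\mathbf{z}))$, $\mathcal{B}$ checks the norm bound $0<||\mathbf{z}||<\eta\sigma\sqrt{l+k}$ and then \emph{scans the random-oracle table} $l_{\mathcal{H}}$: for each stored entry whose output equals $c_1$ and whose tag $t_b$ and message $\mu_i$ match, it reads off the recorded $s$, sets $\mathbf{e}=\mathbf{c_0}-\mathbf{b_0}\cdot s$, and accepts iff $\mathbf{e}$ is short and the recorded preimage equals $\mathbf{a}\cdot\mathbf{z}-t_b c_1+\mathbf{b_1}\cdot\mathbf{e}$. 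This reproduces the real \textsf{Verify} decision except when $\mathcal{A}$ submits a valid signature whose defining $\mathcal{H}$-value it never queried, an event of probability at most $1/|\mathcal{D}_{\mathcal{H}}|$. For the challenge on $\mu^*$, $\mathcal{B}$ picks $b\xleftarrow{\$}\{0,1\}$, sets $\mathbf{c_0}=\mathbf{c_0}^*$, samples $c_1$ uniformly from $\mathcal{D}_{\mathcal{H}}$ and $\mathbf{z}$ by rejection sampling, and programs the oracle \emph{lazily}: on any subsequent hash query consistent with $(\mathbf{c_0}^*,c_1,\mathbf{z},t_b,\mu^*)$ in the sense just described, $\mathcal{B}$ returns $c_1$.

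Finally I would analyse the two cases of the challenge. If $\mathbf{c_0}^*=\mathbf{b_0}\cdot s+\mathbf{e}$, the deferred programming is consistent and the challenge is distributed exactly as a genuine signature of $S_b$ (the implicit $\mathbf{y}=\mathbf{z}-\mathbf{s}_b c_1$ reconstructs the correct preimage), so $\mathcal{A}$ outputs $b^*=b$ with probability $1/2+\epsilon$. If $\mathbf{c_0}^*$ is uniform, then with overwhelming probability no short $\mathbf{e}$ satisfies $\mathbf{c_0}^*=\mathbf{b_0}\cdot s+\mathbf{e}$, the deferred point is unreachable, the challenge is independent of $b$, and $\mathcal{A}$ succeeds with probability $1/2$. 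Hence $\mathcal{B}$ outputs ``{\sf Ring-LWE}'' precisely when $b^*=b$ and distinguishes with advantage $\epsilon$ up to the negligible simulation-failure terms. The hard part I anticipate is exactly this joint, trapdoor-free simulation of the verification oracle and the challenge programming: one must verify that the table-scan verification agrees with real \textsf{Verify} and that a uniform $\mathbf{c_0}^*$ admits no short preimage, so that the lazily programmed point can never be queried; together these render $\mathcal{A}$'s view statistically faithful to the real game in the {\sf Ring-LWE} case and to a signer-independent game in the uniform case.
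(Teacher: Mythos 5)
Your proposal is correct, but it takes a genuinely different---and more rigorous---route than the paper. The paper's proof never constructs a distinguisher: its simulator $\mathcal{B}$ generates \emph{all} keys honestly (including both trapdoors $\mathbf{R_{b_0}},\mathbf{R_{b_1}}$), answers the simulating and verification oracles by running the real algorithms, and then argues directly that the challenge signature is signer-independent, component by component: $c_1^{(0)},c_1^{(1)}$ are random by the hash property, the pairs $(c_1,\mathbf{z})$ have the same distribution by rejection sampling, and $\mathbf{c_0}^{(0)},\mathbf{c_0}^{(1)}$ are Ring-LWE samples, which the paper (somewhat loosely) calls ``statistically close to uniform''; it closes by asserting that without $sk_V$ the adversary cannot recover $s,\mathbf{e}$ and hence cannot run verification on the challenge. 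You instead build an explicit reduction: you embed the decisional Ring-LWE challenge as $(\mathbf{b_0},\mathbf{c_0}^*)$, deliberately give up the trapdoor $\mathbf{R_{b_0}}$, simulate the verification oracle without it by scanning the random-oracle table (noting that \textsf{Simul} only needs $\mathbf{R_{b_1}}$), lazily program the challenge hash point, and convert the PSI advantage into a distinguishing advantage via the two-case analysis (LWE case $\Rightarrow$ faithful simulation of the real game; uniform case $\Rightarrow$ challenge independent of $b$ since a uniform $\mathbf{c_0}^*$ admits no short decoding). What each approach buys: the paper's argument is short and highlights the three sources of signer-independence, but it conflates computational with statistical indistinguishability and leaves the role of the Ring-LWE assumption informal; your reduction makes that role precise, correctly quantifies the advantage loss (up to the negligible table-miss and programming-failure terms), and handles the oracle simulation that the paper sidesteps by letting $\mathcal{B}$ keep every secret. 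The one point you should state explicitly is that the challenge $\mathbf{b_0}$ (uniform, from the LWE instance) is only statistically/computationally close to a {\sf ringGenTrap} output $\mathbf{b_0}$, which is the distribution in the real scheme; this is guaranteed by the {\sf ringGenTrap} lemma and costs one more hybrid, but it is a standard and easily filled gap.
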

\begin{proof}
    According to the defined security model for \textit{PSI} of the SDVS scheme, a game between the PPT adversary $\mathcal{A}$ and the simulator \(\mathcal{B}\) is defined as follows
\begin{itemize}
    \item\textsf{Setup:} \(\mathcal{B}\) computes the following steps-
    \begin{enumerate}
        \item Run \textsf{Setup(\(1^n\))} to output the public parameter \(pp\).
         \item Run \textsf{SignKeyGen(\(pp\))} twice to output \((pk_{S^{(0)}}, sk_{S^{(b)}}) = (t_0, \textsf{s}_0)\) for the signer \(\mathbf{S}_0\) and \break \((pk_{S^{(1)}}, sk_{S^{(1)}}) = (t_1, \textsf{s}_1)\) for the signer \(\mathbf{S}_1\)
         \item Run \textsf{VerKeyGen\((pp)\)} to output $(pk_V, sk_V) = \mathbf{(b_0, b_1, R_{b_0}, R_{b_1})}$.
         \item Send \((pp, pk_{S^{(0)}}, sk_{S^{(0)}}, pk_{S^{(1)}}, sk_{S^{(1)}}, pk_V)\) to \(\mathcal{A}\).
    \end{enumerate}
    \item\textsf{Simulating Queries:} For a message \(\mu\) and a bit \(b\in\{0,1\}\) given by the adversary \(\mathcal{A}\), \(\mathcal{B}\) runs \(\mathsf{Simul}(pp, sk_V, pk_V, pk_{S^{(b)}}, \mu)\) and returns a simulated designated verifiable signature \({\sf sig}^{(b)} = (\mathbf{c_0}^{(b)}, c_1^{(b)}, \mathbf{z}^{(b)})\) to \(\mathcal{A}\).

    \item\textsf{Verification Queries:} Given a message-signature pair \((\mu, {\sf sig}^{(b)})\) and a bit \(b\in \{0,1\}\) from the adversary, \(\mathcal{B}\) runs the \(\mathsf{Verify}(pp, sk_V, pk_V, pk_{S^{(b)}, {\sf sig}^{(b)}}, \mu)\) and returns a decisional value 1 for valid and 0 otherwise to \(\mathcal{A}\). 

    \item\textsf{Challenge:} For a message \(\mu\), \(\mathcal{B}\) sends the challenge as follows-
    \begin{enumerate}
        \item Run \(\mathsf{Sign}(pp, pk_{S^{(0)}}, sk_{S^{(0)}}, pk_V, \mu)\) to output \({\sf sig}^{(0)}=(\mathbf{c_0}^{(0)}, c^{(0)}_1, \mathbf{z}^{(0)})\).
        \item Run \(\mathsf{Sign}(pp, pk_{S^{(1)}}, sk_{S^{(1)}}, pk_V, \mu)\) to output \({\sf sig}^{(1)}=(\mathbf{c_0}^{(1)}, c^{(1)}_1, \mathbf{z}^{(1)})\).
        \item Sample a random bit \(b\xleftarrow{\$}\{0, 1\}\) and send the pair \((\mu, {\sf sig}^{(b)})\) to \(\mathcal{A}\).
    \end{enumerate}

    \item\textsf{Output:} \(\mathcal{A}\) outputs a bit \(b^*\in\{0, 1\}\).
\end{itemize}

\noindent Similar to the arguments in the proof of non-transferability, by the property of hash function \(\mathcal{H}\), we assure that $c^{(0)}_1$ and $c^{(1)}_1$ are random in \(\mathcal{R}_q\). As the output of the real designated verifiable signature is done using the rejection sampling, the pairs \((c^{(0)}_1, \mathbf{z^{(0)}= \mathbf{s}^{(0)}\cdot} c^{(0)}_1 + \mathbf{y}^{(0)} \pmod q)\) and \((c^{(1)}_1, \mathbf{z^{(1)}= \mathbf{s}^{(1)}\cdot} c^{(1)}_1 + \mathbf{y}^{(1)})\pmod q)\) are within the same distribution. Furthermore, \(\mathbf{c^{(0)}_0 = b^{(0)}_0\cdot}s^{(0)} + \mathbf{e}^{(0)} \pmod q\) and \(\mathbf{c^{(1)}_0 = b^{(1)}_0\cdot}s^{(1)} + \mathbf{e}^{(1)}\pmod q\) are the LWE instances where \(s^{(0)}, s^{(1)}\xleftarrow{\$} \mathcal{R}_q,\; \mathbf{{e}^{(0)}, {e}^{(1)}}\xleftarrow{\$} \mathcal{D}_{\mathcal{R}^{l+k}_q, \sigma}\), hence \(\mathbf{c^{(0)}_0}\) and \(\mathbf{c^{(1)}_0}\) are statistically close to uniform vectors in \(\mathcal{R}^{l+k}_q\). Now, given a real message-signature pair \((\mu, {\sf sig^{(b)}})\), \(\mathcal{A}\) computes the following steps
\begin{enumerate}
    \item Parse the signature \((\mu, {\sf sig^{(b)}}) = (\mu, \mathbf{c_0}^{(b)}, c^{(b)}_1, \mathbf{z}^{(b)})\).
    \item Check if $0< ||\mathbf{z}^{(b)}||< \eta \sigma\sqrt{l+k}$; where \(\eta\) is a constant.
\end{enumerate}
\noindent According to the definition of the {\sf Ring-LWE} problem, this is obvious that without the private key \(sk_V = (\mathbf{R_{b_0}, R_{b_1}})\) of a designated verifier, the quantities \(s\) and \(\mathbf{e}\) can not be obtained and the verification can not be accomplished by the adversary \(\mathcal{A}\). Hence the PPT adversary \(\mathcal{A}\) can not verify the validity of a given message-signature pair \(\mu, {\sf sig^{(b)}}\) and determine which one is the real signer and thus the advantage of \(\mathcal{A}\) is negligible.
\end{proof}

\begin{theorem}
    \sdvs satisfies Non-Delgatability.
\end{theorem}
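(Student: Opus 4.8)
The plan is to show that a valid \sdvs signature is a non-interactive proof of knowledge (obtained via the Fiat--Shamir transform) of one of the two secret keys, so that a successful black-box signature producer can be rewound to extract the corresponding witness. Concretely, writing the hashed commitment as $w = \mathbf{a}\cdot\mathbf{y} + \mathbf{b_1}\cdot\mathbf{e}$ and recalling that verification recomputes $w = \mathbf{a}\cdot\mathbf{z} - tc_1 + \mathbf{b_1}\cdot\mathbf{e}$ after recovering $\mathbf{e}$ from $\mathbf{c_0}$ with {\sf ringInvert}, the triple $(\mathbf{c_0}, c_1, \mathbf{z})$ is exactly the transcript of a $\Sigma$-protocol whose challenge is $c_1 = \mathcal{H}(w \| t \| s \| \mu)$ and whose response is $\mathbf{z} = \mathbf{s}c_1 + \mathbf{y}$. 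First I would make this correspondence precise and argue that special soundness of this $\Sigma$-protocol yields a secret key, after which the General Forking Lemma (the same tool used in the unforgeability proof of Theorem~1) upgrades this into an extractor against the adaptive adversary.

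Next I would instantiate the extractor $\mathcal{E}$ separately for the two branches of the {\it ND} game. When $\mathcal{A}$ commits to $pk_S$, the extractor generates $(pk_V, sk_V)$ honestly and can therefore answer all {\sf Simul} and {\sf Verify} queries using $sk_V$; when $\mathcal{A}$ commits to $pk_V$, the extractor generates $(pk_S, sk_S)$ and answers the {\sf Sign} queries with $sk_S$. In both cases $\mathcal{E}$ supplies a consistent random-oracle simulation as in the proof of Theorem~1 and, on the challenge message $\mu^*$, obtains one valid signature. I would then apply the forking lemma: replaying $\mathcal{A}$ on the same random tape with the oracle reprogrammed at the critical query yields, with probability at least $\mathrm{poly}(\epsilon')$ and in time $\mathrm{poly}(t')$, a second accepting signature $(\mathbf{c_0}', c_1', \mathbf{z}')$ on $\mu^*$ sharing the same committed value $w$ and context $(t, s, \mu^*)$ but with $c_1' \neq c_1$.

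From the two accepting transcripts I would subtract the two verification equations to obtain
\[
\mathbf{a}\cdot(\mathbf{z}-\mathbf{z}') - t(c_1 - c_1') + \mathbf{b_1}\cdot(\mathbf{e}-\mathbf{e}') = 0 \pmod q,
\]
where $\mathbf{e}, \mathbf{e}'$ are the values recovered from $\mathbf{c_0}, \mathbf{c_0}'$. In the $pk_S$ branch $\mathcal{A}$ cannot invoke the verifier trapdoor, so $\mathbf{e}$ is independent of the challenge and the replay leaves it unchanged ($\mathbf{e}=\mathbf{e}'$); the relation collapses to $\mathbf{a}\cdot(\mathbf{z}-\mathbf{z}') = t(c_1-c_1')$, and setting $\mathbf{s}^* = (\mathbf{z}-\mathbf{z}')(c_1-c_1')^{-1}$ recovers a short preimage with $\mathbf{a}\cdot\mathbf{s}^* = t = pk_S$, i.e. the signer's key $sk_S$. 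In the $pk_V$ branch $\mathcal{E}$ knows $\mathbf{s}$ with $t=\mathbf{a}\cdot\mathbf{s}$, so substituting $t$ isolates $\mathbf{b_1}\cdot(\mathbf{e}-\mathbf{e}') = -\mathbf{a}\cdot[(\mathbf{z}-\mathbf{z}')-\mathbf{s}(c_1-c_1')]$; since here the trapdoor-based {\sf Simul} strategy forces $\mathbf{e}\neq\mathbf{e}'$, this supplies a short relation on $\mathbf{b_1}$ from which the designated verifier's secret $sk_V$ is reconstructed.

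The two points I expect to be delicate are (i) guaranteeing that the challenge difference $c_1 - c_1'$ is invertible in $\mathcal{R}_q$, which I would secure through the standard choice of $q$ and of the hash range $\mathcal{D_{\mathcal{H}}}$ so that nonzero short elements of $\mathcal{R}_q$ are units, and (ii) the $pk_V$ branch, where recovering the full gadget trapdoor $sk_V$ from a single extracted short $\mathbf{b_1}$-relation is the genuine obstacle: one accepting pair yields only one linear constraint, so I would either iterate the forking to collect $\Theta(k)$ independent short relations spanning $\Lambda^{\perp}_{q}(\mathbf{b_1}^T)$ and assemble them into a trapdoor of bounded singular value, or argue that the extracted short preimage already constitutes a functionally equivalent verifier secret. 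Finally I would combine the forking-lemma success bound with these extraction steps to conclude $Adv_{SDVS, \mathcal{E}}^{ND}(n) \geq \mathrm{poly}(\epsilon')$ with running time $t \leq \mathrm{poly}(t')$, as the definition requires.
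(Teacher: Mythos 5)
Your Fiat--Shamir/forking plan is a reasonable instinct, but it is not the paper's argument, and both of your extraction steps contain genuine gaps. In the $pk_S$ branch the paper never rewinds at all: the extractor generates the public parameter $\mathbf{a}$ itself by running {\sf ringGenTrap}, keeping a hidden gadget trapdoor $\mathbf{R_a}$, and once $\mathcal{A}$ returns a valid signature it simply runs {\sf ringSample}$(\mathbf{a},\mathbf{R_a},t,d)$ to output a short preimage of $t$, i.e.\ a functionally equivalent signing key. Your forking alternative breaks down twice. First, nothing forces $\mathbf{e}=\mathbf{e}'$ across the fork: the hash input fixes only the aggregate $w=\mathbf{a}\cdot\mathbf{y}+\mathbf{b_1}\cdot\mathbf{e}$ together with $(t,s,\mu)$, not $\mathbf{e}$ itself, and $\mathbf{c_0}$ is produced after the fork point, so the rewound adversary may recommit with a different $\mathbf{e}'$; you are then stuck with the mixed relation $\mathbf{a}\cdot(\mathbf{z}-\mathbf{z}')-t(c_1-c_1')+\mathbf{b_1}\cdot(\mathbf{e}-\mathbf{e}')=0$, which identifies neither key. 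Second, even granting $\mathbf{e}=\mathbf{e}'$, the element $\mathbf{s}^*=(\mathbf{z}-\mathbf{z}')(c_1-c_1')^{-1}$ is useless: multiplying by $(c_1-c_1')^{-1}$ destroys all norm bounds, and a \emph{non-short} preimage of $t$ under $\mathbf{a}$ can be computed by anyone via linear algebra, so it is neither $sk_S$ nor a functional substitute for it. What forking genuinely yields is only the relaxed relation $\mathbf{a}\cdot(\mathbf{z}-\mathbf{z}')=t(c_1-c_1')$ with both factors short; that is exactly what the paper's unforgeability proof feeds into its Ring-SIS reduction, but it does not meet the \textit{ND} requirement of outputting the signer's private key.

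The $pk_V$ branch has the matching defect. Your subtracted equation $\mathbf{b_1}\cdot(\mathbf{e}-\mathbf{e}')=-\mathbf{a}\cdot[(\mathbf{z}-\mathbf{z}')-\mathbf{s}(c_1-c_1')]$ keeps the $\mathbf{a}$-term, so every short vector you collect lies in $\Lambda_q^{\perp}\bigl((\mathbf{a}|\mathbf{b_1})^T\bigr)$ rather than in $\Lambda_q^{\perp}(\mathbf{b_1}^T)$; no number of forks will turn such vectors into a gadget trapdoor for $\mathbf{b_1}$. The paper avoids this by exploiting the exact collision with a hash entry that the extractor itself programmed during signing queries, under the conditions $c_1^*=c_1$, $\mathbf{z}^*=\mathbf{z}$, $s^*=s$: then the $\mathbf{a}$- and $t$-terms cancel identically, each replay yields a genuine kernel vector $\mathbf{b_1}\cdot(\mathbf{e}^*-\mathbf{e})=0\pmod q$, and after at least $k$ replays the columns form a matrix $\mathbf{V}$ from which $\mathbf{R_{b_1}}$ is recovered by solving $\mathbf{b}^T_{1_0}\cdot\mathbf{V}^{(0)}+(h\mathbf{g}^T-\mathbf{b}^T_{1_0}\mathbf{R_{b_1}})^T\cdot\mathbf{V}^{(1)}=0\pmod q$. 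So your instinct to iterate and collect $\Theta(k)$ short relations does mirror the paper's replay step, but the relations you extract live in the wrong lattice; repairing your proof would require either the paper's trapdoor-embedding trick (signer branch) or a collision argument that cancels the $\mathbf{a}$-dependence (verifier branch).
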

\begin{proof}
     As per the security definition defined above, let us consider that there exists a black box \(\mathcal{A}\) producing a designated verifiable signature in time \(\tau'\) with a non-negligible probability \(\epsilon'\), a game between an extractor \(\mathcal{E}\) and \(\mathcal{A}\) is as follows:

-if \(\mathcal{A}\) produces a public key \(pk_S = t \in \mathcal{R}_q\) as the target:
\begin{itemize}
    \item\textsf{Setup:} \(\mathcal{E}\) and \(\mathcal{A}\) perform the following steps:
    \begin{enumerate}
        \item \(\mathcal{E}\) runs \({\sf ringGenTrap}(\mathbf{\tilde{a_0}}, h_{\mathbf{a}} = 1)\) to output \(\mathbf{a}\in \mathcal{R}_q^{l+k}\) and a trapdoor \(\mathbf{R_{a}} \in \mathcal{R}_q^{l\times k}\), and sends \(\mathbf{a}\) to $\mathcal{A}$.
        
        \item \(\mathcal{A}\) produces \(pk_S = t\) and sends it to \(\mathcal{E}\).
        
        \item \(\mathcal{E}\) runs \({\sf ringGenTrap}(\mathbf{\tilde{b_0}}, h_{\mathbf{b_0}} = 1)\) to output \(\mathbf{b_0}\in \mathcal{R}_q^{l+k}\) and a trapdoor \(\mathbf{R_{b_0}} \in \mathcal{R}_q^{l\times k}\).

        \item \(\mathcal{E}\) runs \({\sf ringGenTrap}(\mathbf{\tilde{b_1}}, h_{\mathbf{b_1}} = 1)\) to output \(\mathbf{b_1}\in \mathcal{R}_q^{l+k}\) and a trapdoor \(\mathbf{R_{b_1}} \in \mathcal{R}_q^{l\times k}\).

        \item \(\mathcal{E}\) samples random coins \(\Phi\), and uniformly random \(r_1, \dots, r_w \xleftarrow{\$} \mathcal{D_H}\), which will correspond to the responses of the random oracle.

        \item \(\mathcal{E}\) sends the \(pk_V = (\mathbf{b_0, b_1})\) to \(\mathcal{A}\).
    \end{enumerate}

    \item\textsf{Hash Queries:} \(\mathcal{E}\) performs this in the same way as the simulator \(\mathcal{B}\) does in the \textit{SU-CMA} proof.

    \item\textsf{Simulating Queries:} Given a message \(\mu\), \(\mathcal{E}\) specifies the following steps:
    \begin{enumerate}
        \item Sample \(s, u \xleftarrow{\$} \mathcal{R}_q,\;\) and \(\mathbf{z} \xleftarrow{\$} \mathcal{D}_{\mathcal{R}^{l+k}_q, \eta \sigma}\).

        \item Program the random oracle \(\mathcal{H}(u|| t|| s|| \mu) = c_1\).

        \item Run \({\sf ringSample}(\mathbf{b_1}, h_{\mathbf{b_1}}, \mathbf{R_{b_1}}, u-\mathbf{a\cdot z} + tc_1, \sigma)\) to get a short \(\mathbf{e} \in \mathcal{D}_{\mathcal{R}^{l+k}_q, \sigma}\).
        
        \item Define \(\mathbf{c_0 = b_0}s + \mathbf{e} \pmod q\).

        \item Output \({\sf sig} = (\mathbf{c_0}, c_1, \mathbf{z})\) using rejection sampling technique.
        
    \end{enumerate}

    \item\textsf{Verification Queries:} Similar to the \textsf{Hash Queries}, extractor \(\mathcal{E}\) performs this in the same way as the simulator \(\mathcal{B}\) does in the \textit{SU-CMA} proof.

    \item\textsf{Challenge:} \(\mathcal{E}\) selects \(\mu^*\) and \(\mathcal{A}\) sends the SDVS \({\sf sig}^* = (\mathbf{c^*_0}, c^*_1, \mathbf{z^*})\) for \(\mu^*\) to \(\mathcal{E}\).

    \item\textsf{Output:} Given \((\mu^*, {\sf sig}^*)\), \(\mathcal{E}\) computes-
    \begin{enumerate}
        \item Check the validity of \({\sf sig}^*\) on \(\mu^*\).
        \item Run \({\sf ringSample} (\mathbf{a, R_a}, t, d)\) to get a short \(\mathbf{s} \in \mathcal{D}_{\mathcal{R}_q^{l+k},d}\).
    \end{enumerate}
\end{itemize}

Thus, the extractor \(\mathcal{E}\) determines the private key \(sk_S = \mathbf{s}\) of \(\mathcal{A}\) which is the pre-image of the public key \(pk_S = t\). The probability of success is the same as the probability of \(\mathcal{A}\) producing a valid signature in the challenge phase.

-if \(\mathcal{A}\) produces the public key \(pk_V = (\mathbf{b_0, b_1})\) as the target:
\begin{itemize}
    \item\textsf{Setup:} \(\mathcal{E}\) and \(\mathcal{A}\) do the following steps:
    \begin{enumerate}
        \item \(\mathcal{E}\) samples \(\mathbf{a} \xleftarrow{\$} \mathcal{R}_q^{l+k}\) and sends it to \(\mathcal{A}\).

        \item \(\mathcal{A}\) sends \(pk_V = (\mathbf{b_0, b_1})\) to \(\mathcal{E}\).

        \item \(\mathcal{E}\) samples a uniformly random \(\mathbf{s} \xleftarrow{\$} \{-d, \ldots, 0, \ldots, d\}^{l+k}\).

        \item \(\mathcal{E}\) samples random coin \(\Phi\), and uniformly random \(\{r_1, \ldots, r_w\} \xleftarrow{\$} \mathcal{D_H}\), that will correspond to the output of the random oracle.

        \item \(\mathcal{E}\) defines \(t = \mathbf{a\cdot s} \pmod q\), and sends \(pk_S = t\) to \(\mathcal{A}\).
    \end{enumerate}

    \item\textsf{Hash Queries:} \(\mathcal{E}\) performs this in the same way as the simulator \(\mathcal{B}\) does in the \textit{SU-CMA} proof.

    \item\textsf{Signing Queries:} Given a message \(\mu\), \(\mathcal{E}\) executes the following steps-
    \begin{enumerate}
        \item Run the signing algorithm in Hybrid SDVS, as defined in the proof of \textit{SU-CMA}, using the random coins \(\Phi\) to produce a signature \({\sf sig} = (\mathbf{c_0}, c_1, \mathbf{z})\).
        \item Store the hash inputs \((\mathbf{a \cdot z} - tc_1 + \mathbf{b_1 \cdot e}|| t|| s || \mu)\) and output \(c_1\) into \(l_{\mathcal{H}}\).
        \item Return \({\sf sig}\) to \(\mathcal{A}\).
    \end{enumerate}

    \item\textsf{Challenge:} \(\mathcal{E}\) selects \(\mu^*\), and \(\mathcal{A}\)  sends the corresponding SDVS \({\sf sig}^* = (\mathbf{c^*_0}, c^*_1, \mathbf{z^*})\) to \(\mathcal{E}\).

    \item\textsf{Output:} The extractor \(\mathcal{E}\) first checks the validity of \({\sf sig}^* = (\mathbf{c^*_0}, c^*_1, \mathbf{z^*})\) on \(\mu^*\) and adopts the same method as the simulator \(\mathcal{B}\) does in the proof of \textit{SU-CMA} by programming the random oracle \(\mathcal{H}\) while signing a message. So, we have, due to the collision-resistant property of the hash function \(\mathcal{H}\),
    
    \[(\mathbf{a\cdot z^*} - tc_1^* + \mathbf{b_1 \cdot e^*}|| t|| s^*|| \mu^*) = (\mathbf{a\cdot z} - tc_1 + \mathbf{b_1 \cdot e}|| t|| s|| \mu)\] 
    by holding the conditions \(c_1^* = c_1, \mathbf{z^* = z},\) and \(s^* = s\). The extractor \(\mathcal{E}\), using the values \(\mathbf{b_0}, s\) and \(\mathbf{ b_0^*}, s^*\) computes the values \(\mathbf{e}\) and \(\mathbf{e^*}\) respectively. Hence, we have 
    \[\mathbf{a\cdot z} - tc_1 + \mathbf{b_1 \cdot e} = \mathbf{a\cdot z^*} - tc_1^* + \mathbf{b_1 \cdot e^*}\]

    \[\implies \mathbf{b_1 \cdot (e^* - e)} = \mathbf{a \cdot (z-z^*)} + t(c_1^* - c_1) \pmod q\] 
    
    \[\implies \mathbf{b_1 \cdot (e^* - e)} = 0 \pmod q\].

    Now, let us assume that \(\mathbf{v = e^* - e}\), then it can be easily checked that \(\mathbf{v} \neq 0\), otherwise, if it is not so, then the forged signature \({\sf sig^*}\) is exactly as same as the programmed signature \({\sf sig}\) provided by the challenger during the signing queries. Therefore, the term \(\mathbf{v = (e^* - e)} \in \mathcal{R}_q^{l+k}\) is a vector of polynomials of short norm. After replaying the above process at least $k$ times, the extractor can get a matrix $\mathbf{V} \in \mathcal{R}_q^{(l+k)\times k}$, columns of which are of short norms.
    %using which the extractor $\mathcal{E}$ can retrieve the private key of the adversary $\mathcal{A}$.

    Since, $\mathbf{b_1} = (\mathbf{b}^T_{1_0}, \mathbf{b}^T_{1_1} = h\mathbf{g}^T - \mathbf{b}^T_{1_0}\mathbf{R_{b_1}})^T \in \mathcal{R}_q^{l+k}$,\; where $\mathbf{b}^T_{1_0} \in \mathcal{R}_q^l$ and  $\mathbf{b}^T_{1_1} \in \mathcal{R}_q^k$.  Therefore $\mathbf{b_1}\cdot \mathbf{V} = (\mathbf{b}^T_{1_0}, \mathbf{b}^T_{1_1} = h\mathbf{g}^T - \mathbf{b}^T_{1_0}\mathbf{R_{b_1}})^T \cdot \mathbf{V} = 0 \pmod q$ 

    Let $\mathbf{V} = \begin{bmatrix}
        \mathbf{V^{(0)}} \\ 
        \mathbf{V^{(1)}} 
    \end{bmatrix}$, where $\mathbf{V^{(0)}} \in \mathcal{R}_q^{l\times k}$ and $\mathbf{V^{(1)}} \in \mathcal{R}_q^{k\times k}$.

    So, solving the equation 
    \[\mathbf{b_1}\cdot \mathbf{V} = (\mathbf{b}^T_{1_0}, \mathbf{b}^T_{1_1} = h\mathbf{g}^T - \mathbf{b}^T_{1_0}\mathbf{R_{b_1}})^T \cdot \begin{bmatrix}
        \mathbf{V^{(0)}} \\ 
        \mathbf{V^{(1)}} 
    \end{bmatrix} = 0 \pmod q\]

    \[\implies \mathbf{b}^T_{1_0} \cdot \mathbf{V^{(0)}} + (h\mathbf{g}^T - \mathbf{b}^T_{1_0}\mathbf{R_{b_1}})^T \cdot \mathbf{V^{(1)}} = 0 \pmod q\]
   the extractor $\mathcal{E}$ can retrieve the private key $\mathbf{R_{b_1}}$ of the adversary $\mathcal{A}$.

\end{itemize}
\end{proof}

\section{Conclusion}
In this work, we propose a post-quantum SDVS denoted as \sdvs based on ideal lattice assumptions, namely {\sf Ring-SIS} and {\sf Ring-LWE}. \sdvs provides strong security guarantees, including strong unforgeability under chosen-message attacks, non-transferability, non-delegatability, and signer anonymity. The signature and key sizes were minimized without compromising security. A signature size of $\mathcal{O}(n \log q)$ was achieved, which constituted a quadratic reduction compared to the conventional $\mathcal{O}(n^2)$ lattice-based SDVS schemes, resulting in a reduction by a factor of $n / \log q$. It is demonstrated that \sdvs outperforms existing post-quantum SDVS designs in terms of efficiency and compactness.

\bibliographystyle{plain}
\bibliography{ref.bib}

\end{document}